\newtheorem{lemma}{Lemma}
\newtheorem{observation}{Observation}
\begin{document}

\title{Interpreting quantum coherence through quantum measurement process}

\author{Yao Yao}
\email{yaoyao@mtrc.ac.cn}
\affiliation{Microsystems and Terahertz Research Center, China Academy of Engineering Physics, Chengdu Sichuan 610200, China}
\affiliation{Institute of Electronic Engineering, China Academy of Engineering Physics, Mianyang Sichuan 621999, China}

\author{G. H. Dong}
\affiliation{Beijing Computational Science Research Center, Beijing, 100094, China}

\author{Xing Xiao}
\affiliation{College of Physics and Electronic Information, Gannan Normal University, Ganzhou Jiangxi 341000, China}

\author{Mo Li}
\email{limo@mtrc.ac.cn}
\affiliation{Microsystems and Terahertz Research Center, China Academy of Engineering Physics, Chengdu Sichuan 610200, China}
\affiliation{Institute of Electronic Engineering, China Academy of Engineering Physics, Mianyang Sichuan 621999, China}

\author{C. P. Sun}
\affiliation{Beijing Computational Science Research Center, Beijing, 100094, China}
\affiliation{Synergetic Innovation Center of Quantum Information and Quantum Physics, University of Science and Technology of China,
Hefei, Anhui 230026, China}

\date{\today}

\begin{abstract}
Recently, there has been a renewed interest in the quantification of coherence or other coherence-like concepts
within the framework of quantum resource theory. However, rigorously defined or not,
the notion of \textit{coherence} or \textit{decoherence} has already been used by the community
for decades since the advent of quantum theory. Intuitively, the definitions of coherence and decoherence should be
\textit{the two sides of the same coin}. Therefore, a natural question is raised: how can the conventional decoherence
processes, such as the von Neumann-L\"{u}ders (projective) measurement postulation or partially dephasing channels,
fit into the bigger picture of the recently established theoretic framework? Here we show that the
state collapse rules of the von Neumann or L\"{u}ders-type measurements, as special cases of genuinely incoherent operations (GIO),
are consistent with the resource theories of quantum coherence. New hierarchical measures of coherence are proposed for the L\"{u}ders-type measurement
and their relationship with measurement-dependent discord is addressed.
Moreover, utilizing the fixed point theory for $C^\ast$-algebra, we prove that GIO indeed represent
a particular type of partially dephasing (phase-damping) channels
which have a matrix representation based on the Schur product.
By virtue of the Stinespring's dilation theorem, the physical realizations
of incoherent operations are investigated in detail and we find that GIO in fact constitute the \textit{core} of strictly incoherent
operations (SIO) and generally incoherent operations (IO) and the \textit{unspeakable} notion of coherence induced by GIO can be transferred
to the theories of speakable coherence by the corresponding permutation or relabeling operators.
\end{abstract}

\pacs{03.65.Ta, 03.67.Mn}

\maketitle
\section{INTRODUCTION}
Quantum coherence, as one of the most fundamental and characteristic concept in quantum theory,
has long been recognized as a valuable resource for modern quantum technologies,
such as quantum computation \cite{Cleve1997,Nielsen}, quantum cryptography \cite{Gisin2002,Scarani2009}
and quantum metrology \cite{Giovannetti2006,Giovannetti2011}. Despite its crucial importance
in the development of quantum information science, only very recently a rigorous theoretical framework
has been established by virtue of quantum resource theory to quantify the usefulness of quantum coherence
contained in quantum states \cite{Baumgratz2014}. In the corresponding resource theory of coherence,
the incoherent (free) states are defined with respect to a \textit{prefixed} orthogonal basis,
which is a convex set containing all diagonal states in this specific basis, while the resource states
are those with nonzero off-diagonal elements. The restricted set of operations (i.e., the
incoherent operations) is constructed with the defining property that every incoherent operation has a Kraus decomposition each branch
of which is coherence-nongenerating \cite{Baumgratz2014}.
Refs. \cite{Streltsov2016,Hu2017} provide detailed reviews of recent advances in the theoretical understanding
and characterization of quantum coherence.

Though such an axiomatic framework is mathematically well-defined, its physical consistency has been further considered \cite{Chitambar2016a,Marvian2016}.
First, the coherence measures proposed in \cite{Baumgratz2014} are apparently basis-dependent and this fact implies that
prior to any usage of these quantifiers, a justification or specification of the choice of basis is needed according to the theoretical model
or experimental setup \cite{Marvian2016,Yao2016}. To be more precise, most recent work based on the resource theory characterizes the \textit{speakable}
notion of coherence \cite{Marvian2016}, that is, relabeling or permutation of basis states is allowed in this occasion, which
is in sharp contrast to the resource theory of unspeakable coherence (i.e., asymmetry) \cite{Marvian2013,Marvian2014}.
Second, several alternative proposals of the resource theory of coherence have also been put forward to impose further constraints on the free operations,
such as the maximal incoherent operations (MIO) \cite{Aberg2006a,Aberg2006b}, dephasing-covariant incoherent operations (DIO) \cite{Marvian2016,Chitambar2016a},
strictly incoherent operations (SIO) \cite{Winter2016,Yadin2016} and genuinely incoherent operations (GIO) \cite{Vicente2017}.
However, the free (i.e., incoherent) operations defined in these scenarios are not \textit{truly} free in the sense of
Stinespring dilation, which means these operations are not strictly free-implementable \cite{Chitambar2016a,Marvian2016}.
Moreover, a physically-consistent resource theory has been introduced in \cite{Chitambar2016a} under the name of
physically incoherent operations (PIO), but the class of PIO is too restrictive and state transformations under this set
are rather limited \cite{Chitambar2016b}.

On the other hand, any realistic quantum system will inevitably interact with its environment and the notion of
decoherence represents the destruction of quantum coherence between a superposition of preferred states \cite{Joos2003,Schlosshauer2007}.
Intuitively, the definitions of coherence and decoherence should be the two sides of the same coin. In comparison to
the resource theory of coherence, the decoherence basis usually emerges associated
with the specific physical process. Two well known examples are the von-Neumann's projective measurement \cite{Neumann} and
the pointer states induced by einselection \cite{Zurek1981,Zurek1982,Zurek2003,Schlosshauer2005}. Moreover, we wonder
whether the resource theory of coherence proposed recently is compatible with previous interpretations of decoherence,
since such a consistency will help us to obtain an in-depth understanding of the paradigmatic models of decoherence processes.
More precisely, the aim of this work is to gain more insight into the characterization of quantum
coherence through the investigations of decohering powers and physical realizations of various types of quantum incoherent operations.

This paper is organized as follows. In Sec. \ref{sec2}, we briefly review two representations of quantum operations
and their relationship. In Sec. \ref{sec3}, we present an interpretation of popular coherence measures through
the von Neumann measurement theory and generalize this line of thought to the L\"{u}ders-type measurement,
where the minimum disturbance principle is highlighted. Moreover, the L\"{u}ders-measurement-dependent discord
is introduced for bipartite system and its relation with L\"{u}ders-type coherences is illustrated.
In Sec. \ref{sec4}, we provide a detailed analysis of the structures and physical realizations of GIO, SIO and IO,
demonstrating that GIO or SIO can be seen as the core of other types of incoherent operations.
Discussions and final remarks are given in Sec. \ref{sec5} and several open questions are raised
for future research.

\section{Stinespring-Kraus representation of quantum channel}\label{sec2}
Let $\mathcal{H}$ be the finite-dimensional Hilbert space and $\mathcal{B}(\mathcal{H})$ ($\mathcal{S}(\mathcal{H})$)
be the set of bounded operators (density operators) on $\mathcal{H}$. A physically valid quantum operation
$\mathcal{E}$: $\mathcal{B}(\mathcal{H})\rightarrow\mathcal{B}(\mathcal{H})$
is defined as a linear trace non-increasing and completely positive map \cite{Kraus1983}.
For simplicity, we assume throughout this paper that $\mathcal{E}$ has equal input and output Hilbert spaces.
In particular, we further identify an operation as a quantum channel if it satisfies the trace-preserving condition.
Mathematically, there exist two explicit and equivalent representations of an arbitrary operation,
which in fact depict the general form of state changes \cite{Stinespring,Kraus1971,Choi1975}:

$\bullet$ The operator-sum representation:
\begin{align}
\mathcal{E}(\rho)=\sum_nK_n\rho K_n^\dagger,
\label{OS}
\end{align}
where $K_n\in \mathcal{B}(\mathcal{H})$, $\sum_nK_n^\dagger K_n\leq\openone_{\mathcal{H}}$
and the equality holds  for quantum channels;

$\bullet$ The Stinespring dilation:
\begin{align}
\mathcal{E}(\rho)=\textrm{Tr}_{\mathcal{A}}(\mathbb{V}\rho \mathbb{V}^\dagger),
\end{align}
where $\mathcal{A}$ is an ancillary system (e.g., an apparatus system) and $\mathbb{V}\in \mathcal{B}(\mathcal{H},\mathcal{H}\otimes \mathcal{A})$
is a contraction (i.e., $\mathbb{V}^\dagger\mathbb{V}\leq\openone_{\mathcal{H}}$). For a trace-preserving map,
$\mathbb{V}$ is actually an isometry.

Intuitively, the Stinespring dilation can be viewed as a \textit{purification} of a quantum operation
on an extended Hilbert space \cite{Buscemi2003}. Furthermore, Kraus and Ozawa proved that a unitary realization can be constructed
for quantum operations, or more generally, quantum instruments \cite{Davies,Kraus1983,Ozawa1984}, which in formula
can be rewritten as
\begin{align}
\mathcal{E}(\rho)=\textrm{Tr}_{\mathcal{A}}\left[(\openone\otimes M_\mathcal{A})\mathbb{U}(\rho\otimes\sigma_{\mathcal{A}})\mathbb{U}^\dagger\right],
\label{IMM1}
\end{align}
where $\mathbb{U}$ is a unitary operation acting on $\mathcal{H}\otimes \mathcal{A}$, $M_\mathcal{A}$
is an effect operator on $\mathcal{A}$ (i.e., $0\leq M_\mathcal{A}\leq\openone_\mathcal{A}$ and $M_\mathcal{A}=\openone_\mathcal{A}$ corresponds to quantum channels)
and $\sigma_{\mathcal{A}}$ is the initial state of the apparatus system.
Eq. (\ref{IMM1}) shows that for a particular quantum operation $\mathcal{E}$, the four-tuple $\{\mathcal{A},\sigma_{\mathcal{A}},\mathbb{U},M_\mathcal{A}\}$
uniquely determines the state change caused by $\mathcal{E}$. In other words, the four-tuple
provides a physical realization of the operation. Under different names, such a realization is also known as the system-apparatus interaction \cite{Neumann},
premeasurement \cite{Beltrametti1990}, or indirect measurement model \cite{Ozawa2000}.

In addition, without loss of generality, one may require that $\sigma_{\mathcal{A}}$ is a pure state and
$M_\mathcal{A}$ is an orthogonal projection operator \cite{Kraus1983}.
Therefore, by denoting $\sigma_{\mathcal{A}}=|a_0\rangle\langle a_0|$ and $M_\mathcal{A}=P_\mathcal{A}$, Eq. (\ref{IMM1}) can be reexpressed as
\begin{align}
\mathcal{E}(\rho)=\textrm{Tr}_{\mathcal{A}}\left[(\openone\otimes P_\mathcal{A})\mathbb{U}(\rho\otimes|a_0\rangle\langle a_0|)\mathbb{U}^\dagger\right].
\label{IMM2}
\end{align}
To see the direct correspondence between two representations of Eqs. (\ref{OS}) and (\ref{IMM2}), it is convenient to
specify an orthogonal decomposition of $P_\mathcal{A}=\sum_n|a_n\rangle\langle a_n|$ and hence
the Kraus operators can be expressed as \cite{Nielsen}
\begin{align}
K_n=\langle a_n|\mathbb{U}|a_0\rangle.
\end{align}
Moreover, the non-uniqueness of Kraus decomposition can be regarded as stemming from the freedom in choosing the basis $\{|a_n\rangle\}$.
Hence different sets of Kraus operators are related to each other by isometric matrices.

\section{L\"{u}ders-type quantum coherence}\label{sec3}
In his seminal work \cite{Neumann}, von Neumann pointed that, in contrast to the unitary transformations described by the Schr\"{o}dinger equation,
there exists another type of intervention for quantum systems. In fact, he formulated a measurement and state-reduction process with respect to
purely discrete and nondegenerate observables, which is better known as the \textit{state collapse postulate}. Later,
L\"{u}ders generalized von Neumann's postulate to degenerate observables \cite{Luders}. In this section, we connect the
von Neumann-L\"{u}ders measurement theory to the interpretation and characterization of the coherence contained
in quantum states, especially relative to the observables under consideration.

\subsection{von Neumann-L\"{u}ders measurement postulation}
Let $\rho\in\mathcal{S}(\mathcal{H})$ be a density matrix of a quantum system in Hilbert space $\mathcal{H}$ and $R$ be a discrete, non-degenerate observable
with the eigen-decomposition $R=\sum_nr_n|\phi_n\rangle\langle\phi_n|$. Based on the Compton-Simons experiment,
von Neumann derived the the well-known state collapse postulate by virtue of the following statistical rule and hypothesis \cite{Neumann}:

$\bullet$ \textit{Born's statistical rule}, which demands that the probability for obtaining the measurement
result $r_n$ is given by
\begin{align}
P(r_n)=\textrm{Tr}(\rho|\phi_n\rangle\langle\phi_n|)=\langle\phi_n|\rho|\phi_n\rangle.
\end{align}
Note that this formula can be generalized to more general measurements described by positive operator-valued measures (POVM)
$\mathcal{M}=\{M_n\}$ with $M_n\geq0$ and $\sum_nM_n=\openone$. Namely, the probability of obtaining the outcome $n$ is $P(M_n)=\textrm{Tr}(\rho M_n)$ \cite{Nielsen}.

$\bullet$ \textit{Repeatability hypothesis}, which states that if a physical quantity is measured twice in succession in a system, then we
get the same value each time. This hypothesis is equivalent to a requirement on the conditional probability:
\begin{align}
P(r_m|r_n)=\textrm{Tr}(\rho_n|\phi_m\rangle\langle\phi_m|)=\delta_{mn},
\end{align}
where $\rho_n$ is the (normalized) resulting state of the system after obtaining the measurement outcome $r_n$.

In particular, according to the repeatability hypothesis, it is easy to prove that the eigenstate $|\phi_n\rangle\langle\phi_n|$ of
the observable $R$ is the only possible post-measurement state for the outcome $r_n$ (see Appendix \ref{A1}). Therefore,
the density matrix $\rho$ is transformed to the following statistical mixture
\begin{align}
\sigma=\sum_nP(r_n)\rho_n=\sum_n\langle\phi_n|\rho|\phi_n\rangle|\phi_n\rangle\langle\phi_n|.
\end{align}
In the language of quantum operation, The corresponding change of the state can be represented by
\begin{align}
\mathcal{D}(\bullet)=\sum_n|\phi_n\rangle\langle\phi_n|\bullet|\phi_n\rangle\langle\phi_n|,
\end{align}
where this superoperator is also known as (completely) pure-dephasing channel or pinching operator \cite{Bhatia}.
Note that $\mathcal{D}$ is idempotent (i.e., $\mathcal{D}^2=\mathcal{D}$) and retains only the diagonal elements of the density matrix.
Apart from the elimination of the off-diagonal elements, it is noteworthy that the decoherence effect of $\mathcal{D}$
is also manifested in the increase of von Neumann entropy \cite{Neumann} (see also Appendix \ref{A2}).

Since the initial state is \textit{completely} decohered by a von Neumann measurement, the above two signatures of decoherence
can be employed to quantifying quantum coherence contained in states. In fact, prior to the rigorous definitions of
quantum coherence in Ref. \cite{Baumgratz2014}, the magnitude of off-diagonal elements in certain basis has long been recognized
as a convenient and useful quantifier of coherence, for instance, in the discussion of quantum interferometric complementary \cite{Jaeger1995,Durr2001,Englert2008}.
Moreover, the von Neumann entropy produced by the projective measurement, dubbed as \textit{the entropy of coherence},
has also been proposed in an attempt to quantify the incompatibility
between a given (nondegenerate) observable and a given quantum state \cite{Herbut2002,Herbut2005},
which is exactly the entropic measure of coherence defined in \cite{Baumgratz2014}. Mathematically,
if we define the set of incoherent states with respect to the nondegenerate observable $R$ as
\begin{align}
\mathcal{I}(\mathcal{H})=\{\rho: \mathcal{D}(\rho)=\rho, \rho\in\mathcal{S}(\mathcal{H})\},
\end{align}
then the corresponding measures of coherence can be formulated as
\begin{align}
C_{l_1}(\rho)&=\sum_{m\neq n}|\langle\phi_m|\rho|\phi_n\rangle|=\min_{\sigma\in\mathcal{I}(\mathcal{H})}\|\rho-\sigma\|_{l_1},\\
C_{\textrm{re}}(\rho)&=S(\mathcal{D}(\rho))-S(\rho)=\min_{\sigma\in\mathcal{I}(\mathcal{H})}S(\rho\|\sigma).
\end{align}

On the other hand, if the eigen-decomposition of the observable $R=\sum_nr_nP_n$ is degenerate (i.e., $d_n=\textrm{Tr}P_n\geq 1$ denote degeneracies),
von Neumann's theory still follows the same routine by alternatively measuring
a commuting \textit{fine-grained} observable $\mathfrak{R}=\sum_{ni}\mu_{ni}|\phi_{ni}\rangle\langle\phi_{ni}|$,
where $\sum_{i=1}^{d_n}|\phi_{ni}\rangle\langle\phi_{ni}|=P_n$ and $\langle\phi_{mi}|\phi_{nj}\rangle=\delta_{mn}\delta_{ij}$.
By defining a function $f$ with $f(\mu_{ni})=r_n$ for all $i=1,\ldots,d_n$, the above fine-graining process can be encapsulated in the following
\begin{align}
R=f(\mathfrak{R}).
\label{fine-graining}
\end{align}

However, since there exists infinite number of ways to decompose the degenerate eigenspaces, this apparent arbitrariness
would lead to the non-uniqueness of state transformation, which means that the formula of state change
will depend on the specific choice of $\widetilde{R}$. To avoid the ambiguousness, L\"{u}ders
generalized von Neumann's postulate to degenerate observables by introducing an extended ansatz for state reduction,
that is \cite{Luders}
\begin{align}
\mathcal{L}(\rho)=\sum_{n}P_n\rho P_n,
\label{Luders}
\end{align}
where $\mathcal{L}(\bullet)$ is also known as the L\"{u}ders state transformer or L\"{u}ders instrument \cite{Busch1991}.
Remarkably, except for the hypothesis of discreteness of spectrum and repeatability, it is demonstrated that
the L\"{u}ders-type state transformation can be derived by introducing an additional requirement of \textit{least interference}
or \textit{minimal disturbance} \cite{Goldberger1964,Herbut1969}. Indeed, by defining a generalized set of incoherent states
\begin{align}
\mathfrak{I}(\mathcal{H})=\{\rho: \mathcal{L}(\rho)=\rho, \rho\in\mathcal{S}(\mathcal{H})\},
\end{align}
it can be shown that the repeatability hypothesis \textit{alone} would render the (possible) reduced state $\sigma$
belonging to $\mathfrak{I}(\mathcal{H})$ (see Appendix \ref{A1}). Form the geometric point of view, the principle of
minimal disturbance amounts to the requirement $\sigma$ is \textit{closest} to the initial state $\rho$ and hence
\textit{uniquely} determines the change-of-state formula.
Thus, the distance metrics, such as matrix norms or entropy quantities, can be unitized to measure the degree of closeness.

In particular, the Hilbert-Schmidt norm $\|\bullet\|_2$ turns out to be a potential choice for demonstrating the closeness
due to its explicit physical meaning and convenience (e.g., basis-independent) \cite{Herbut1969}.
By using the properties of the Hilbert-Schmidt norm, we have
\begin{align}
\|\rho-\sigma\|_2^2&=\left\|\sum_{m\neq n}P_m\rho P_n+\sum_n(P_n\rho P_n-P_n\sigma P_n)\right\|_2^2   \nonumber\\
&=\sum_{m\neq n}\left\|P_m\rho P_n\right\|_2^2+\sum_n\left\|P_n\rho P_n-P_n\sigma P_n\right\|_2^2.
\end{align}
To obtain the minimum value of $\|\rho-\sigma\|_2$, every term in the second summation should be equal to zero, which
is equivalent to the condition $P_n\sigma P_n=P_n\rho P_n$ for all $n$. Therefore, the formula of state change
(i.e., the L\"{u}ders state transformer) can be uniquely determined as
\begin{align}
\sigma=\sum_{n}P_n\sigma P_n=\sum_{n}P_n\rho P_n,
\end{align}
which is exactly the Eq. (\ref{Luders}).

In fact, the above argument can also be extended to the quantum relative entropy, another important quantity
in quantum information theory. Using the idempotent property of projectors, the cyclic property of trace
and the commutation relation $[\sigma,R]=0$, one can obtain the following inequality
\begin{align}
S(\rho\|\sigma)&=S(\rho\|\mathcal{L}(\rho))+S(\mathcal{L}(\rho)\|\sigma)   \nonumber\\
&\geq S(\rho\|\mathcal{L}(\rho)),
\end{align}
where the equality holds for $\sigma=\mathcal{L}(\rho)$. The $l_1$ norm may also participate but
it is a little bit cumbersome since the $l_1$ norm is basis-dependent. Here we can borrow the same
idea from von Neumann that one can decompose the set of orthogonal projectors $\{P_n\}$ into
an \textit{bi-orthogonal} basis $\{\phi_{ni}\}$ for $n=1,\ldots,N$ and $i=1,\ldots,d_n$, where
the dimension of Hilbert space $d=\sum_nd_n\geq N$. For a particular choice of basis $\{\phi_{ni}\}$,
the argument is similar to that of Hilbert-Schmidt norm
\begin{align}
\|\rho-\sigma\|_{l_1}&=\left\|\sum_{m\neq n}P_m\rho P_n+\sum_n(P_n\rho P_n-P_n\sigma P_n)\right\|_{l_1}  \nonumber\\
&=\sum_{m\neq n}\left\|P_m\rho P_n\right\|_{l_1}+\sum_n\left\|P_n\rho P_n-P_n\sigma P_n\right\|_{l_1},
\end{align}
where in such a decomposition of eigenspaces the $l_1$ norm is calculated independently.
Therefore, the above derivations present an alternative and straightforward interpretation of the framework of the L\"{u}ders measurement,
\textit{from the perspective of coherence theory}: while the repeatability hypothesis induces a block-diagonal structure of the state reduction,
the principle of least interference or minimal disturbance is equivalent to the requirement
that the von Neumann-L\"{u}ders measurement will always lead to a final state
which is \textit{closest} to the initial state, comparing to all the other states with
no (generalized) coherence in corresponding decomposition of Hilbert space.
In this sense, the von Neumann-L\"{u}ders measurement is usually deemed as
a completely decohering (or dephasing) channel in the framework of quantum coherence.
Hence, in the resource theory of coherence, the completely decohering (or dephasing) channel
serves as a basic reference for other types of incoherent operations \cite{Chitambar2016b}.

\subsection{Coarse-graining of quantum coherence}
Based on the above geometric considerations, we can generalize the measures of coherence
for the non-degenerate observable to the L\"{u}ders-type measurement. With respect to
the spectral decomposition of a degenerate observable $R=\sum_nr_nP_n$, we define
\begin{align}
C_{l_1}(R,\rho)&=\min_{\sigma\in\mathfrak{I}(\mathcal{H})}\|\rho-\sigma\|_{l_1}=\sum_{m\neq n}\|P_m\rho P_n\|_{l_1},\\
C_{\textrm{re}}(R,\rho)&=\min_{\sigma\in\mathfrak{I}(\mathcal{H})}S(\rho\|\sigma)=S(\mathcal{L}(\rho))-S(\rho).
\end{align}
Note that when $R$ is nondegenerate the generalized set of incoherent states $\mathfrak{I}(\mathcal{H})$
reduces to the ordinary set $\mathcal{I}(\mathcal{H})$.
It is worth to emphasize again that $C_{l_1}(R,\rho)$ is a basis-dependent quantity, where a particular
orthogonal decomposition of eigen-projectors $\{P_n\}$ should be specified, for example, a fine-graining observable
$\mathfrak{R}$ in Eq. (\ref{fine-graining}). On the contrary, $C_{\textrm{re}}(R,\rho)$ is irrespective of such a
fine-graining and hence more feasible and convenient. Thus, $C_{l_1}(R,\rho)$ and $C_{\textrm{re}}(R,\rho)$
can be viewed as a \textit{coarse-graining} version of the corresponding measures proposed in \cite{Baumgratz2014},
and the coarse-graining process is also manifested by the hierarchy relation
\begin{align}
C_{l_1}(\mathfrak{R},\rho)\geq C_{l_1}(R,\rho),\quad
C_{\textrm{re}}(\mathfrak{R},\rho)\geq C_{\textrm{re}}(R,\rho).
\label{Coh-inequality}
\end{align}

Since the first inequality is easily proved by using the relation $\sum_{n\neq m}\sum_{i,j}\leq\sum_{ni\neq mj}$,
the second inequality can be verified by the identity
\begin{align}
C_{\textrm{re}}(\mathfrak{R},\rho)-C_{\textrm{re}}(R,\rho)=S\left[\mathcal{L}_{R}(\rho)\|\mathcal{D}_{\mathfrak{R}}(\rho)\right]\geq0,
\end{align}
where we attach suffixes $R$ and $\mathfrak{R}$ to super-operators $\mathcal{L}$ and $\mathcal{D}$ respectively to indicate with respect to which observable
the corresponding measurement is preformed and note that $\mathcal{L}_{\mathfrak{R}}=\mathcal{D}_{\mathfrak{R}}$ since $\mathfrak{R}$ is nondegenerate.
The differences in Eq. (\ref{Coh-inequality}) indicate that the L\"{u}ders measurement retains some \textit{residual coherence} which resides in
every block of $\mathcal{L}_{R}(\rho)$.
Intriguingly, it was proved that for any state $\rho$ and any degenerate observable $R$ there exists (at least) one fine-grained nondegenerate observable $\mathfrak{R_\star}$
(i.e., $R=f(\mathfrak{R_\star})$) satisfying $\mathcal{L}_{R}=\mathcal{D}_{\mathfrak{R_\star}}$ \cite{Herbut1974}. In this case,
we have
\begin{align}
C_{l_1}(\mathfrak{R}_\star,\rho)=C_{l_1}(R,\rho),\quad
C_{\textrm{re}}(\mathfrak{R}_\star,\rho)=C_{\textrm{re}}(R,\rho),
\end{align}
where the $l_1$ norm of coherence is defined with respect to the common eigenvectors of $R$ and $\mathfrak{R_\star}$ and note that
\begin{align}
C_{l_1}(R,\rho)&=\|\rho-\mathcal{L}_R(\rho)\|_{l_1},\\
C_{l_1}(\mathfrak{R_\star},\rho)&=\|\rho-\mathcal{D}_\mathfrak{R_\star}(\rho)\|_{l_1}.
\end{align}

Moreover, it is natural to extend our consideration to the multipartite system. Consider a bipartite state
$\rho^{AB}$ with reduced states $\rho^A$, $\rho^B$ and a L\"{u}ders measurement of observable $R=\sum_nr_nP_n$
on subsystem B. One can define an \textit{observable-dependent} version of quantum-incoherent (QI) states of the form
\begin{align}
\chi^{AB}=\mathcal{L}^B(\rho^{AB})=\sum_n(\openone\otimes P_n)\rho^{AB}(\openone\otimes P_n),
\end{align}
which would reduce to the normal QI-states introduced in Ref. \cite{Chitambar2016c} if $R$ is nondegenerate.
Note that for degenerate observables (i.e., $\textrm{Tr}P_n>1$ for some index $n$), $\chi^{AB}$ may be entangled,
which is in sharp contrast to the case of von Neumann measurement \cite{Luo2013}.
The generalized QI relative entropy of coherence can be defined as
\begin{align}
C_\textrm{re}^{A|B}(R,\rho^{AB})&=\min_{\chi^{AB}\in\mathcal{QI}}S(\rho^{AB}\|\chi^{AB})  \nonumber\\
&=S(\mathcal{L}^B(\rho^{AB}))-S(\rho^{AB}),
\end{align}
where $\mathcal{QI}$ denotes the set of observable-dependent QI-states.

Inspired by the concept of the basis-dependent quantum discord (i.e., discord dependent on a particular von Neumann measurement)
\cite{Ollivier2001,Henderson2001},
one can define a similar observable-dependent measure of quantum discord
\begin{align}
\delta^{A|B}(R,\rho^{AB})=I(\rho^{AB})-I(\mathcal{L}^B(\rho^{AB})),
\end{align}
where $I(\rho^{AB})=S(\rho^{AB}\|\rho^{A}\otimes\rho^{B})$ is the quantum mutual information of $\rho^{AB}$.
Remarkably, the L\"{u}ders-type quantum discord $\delta^{A|B}(R,\rho^{AB})$ is closely related to the L\"{u}ders-type coherences.
Indeed, a simple algebra shows that
\begin{align}
\delta^{A|B}(R,\rho^{AB})&=C_\textrm{re}^{A|B}(R,\rho^{AB})-C_\textrm{re}(R,\rho^{B}).
\label{Coh-Discord}
\end{align}
When the observable is nondegenerate, then $R$ specifies a orthogonal basis and Eq. (\ref{Coh-Discord})
recovers the same relation for von Neumann measurement \cite{Yadin2016}.
Notably when $R$ is degenerate, the L\"{u}ders-type quantum discord is highly nontrivial \cite{Luo2013}.
In fact, the observable-dependent classical correlation can be defined as
\begin{align}
\mathcal{J}^{A|B}(R,\rho^{AB})&=I(\rho^{AB})-\delta^{A|B}(R,\rho^{AB}) \nonumber\\
&=\sum_np_nS(\rho^A_n\|\rho^A)+\sum_np_nI(\rho^{AB}_n),
\label{OD-CC}
\end{align}
with the post-measurement state $\rho^{AB}_n=(\openone\otimes P_n)\rho^{AB}(\openone\otimes P_n)$
and $\rho^{A}_n=\textrm{Tr}\rho^{AB}_n$. It is worth noting that the second term in Eq. (\ref{OD-CC})
is missing in the original definition of classical correlation \cite{Ollivier2001,Henderson2001}
since for the von Neumann measurement $\rho^{AB}_n$ is a product state. However, for the L\"{u}ders measurement,
we may have $I(\rho^{AB}_n)>0$ implying $\rho^{AB}_n$ is not factorable. This residual part also reflects the fact that
the L\"{u}ders measurement is more gentle than von Neumann measurement and maintains partial coherence
in the measurement process. Interesting, very recently, the author of Ref. \cite{Luo2013},
presented two related papers \cite{Luo2017a,Luo2017b},
where the significance of the L\"{u}ders measurement has also been highlighted in the characterization of quantum coherence using
the skew information.

\section{GIO as Partially dephasing channels}\label{sec4}
For a proper choice of the orthogonal basis, the von Neumann or L\"{u}ders measurement can also be viewed as
special cases of GIO, which are an essential subset of quantum channels preserving all incoherent basis states  \cite{Vicente2017}. By definition,
a crucial fact is that GIO lead to an \textit{unspeakable} notion of quantum coherence within the framework of resource theory \cite{Marvian2016},
which means permutation or relabeling is not allowed regarding the state transformations induced by GIO.
To gain a deeper insight into the nature of GIO, we initiate a further analysis of GIO from two different perspectives:
one from the fixed-point theory of quantum maps and the other from the physical realization of GIO, both highlighting
that GIO are at the core of the resource theory of quantum coherence.

\subsection{Fixed points of unital quantum channels}
Here we consider a finite $d$-dimensional Hilbert space and a CPTP map (i.e., quantum channel) $\Phi:\mathcal{B}(\mathcal{H})\rightarrow\mathcal{B}(\mathcal{H})$.
The property of complete positivity guarantees that $\Phi(\bullet)$  has an operator-sum representation of the form $\Phi(\bullet)=\sum_iK_i\bullet K_i^\dagger$,
and trace preservation of $\Phi$ is equivalent to $\sum_iK_i^\dagger K_i=\openone$.
For a prefixed orthogonal basis $\{|\phi_n\rangle\}$ (or with respect to a nondegenerate observable $R=\sum_nr_n|\phi_n\rangle\langle\phi_n|$),
the set of GIO can be proposed with the defining property
\begin{align}
\textrm{GIO}=\{\Phi:\Phi(\rho)=\rho,\rho\in\mathcal{I}(\rho)\}.
\end{align}
By the linearity of $\Phi$, the above definition is tantamount to
\begin{align}
\textrm{GIO}=\{\Phi:\Phi(|\phi_n\rangle\langle\phi_n|)=|\phi_n\rangle\langle\phi_n|,\forall n\},
\end{align}
which implies that pure incoherent basis states are \textit{fixed points} for GIO. Obviously,
the identity matrix $\openone$ is also preserved by GIO and hence GIO are unital quantum channels (i.e., $\sum_iK_i K_i^\dagger=\openone$).

According to Schauder's fixed-point theorem, there exists at least one density matrix $\rho$ for a CPTP map such that
$\Phi(\rho)=\rho$ \cite{Granas}. Indeed, fixed point theory has already been employed in the investigations of
quantum error correction \cite{Raginsky2002,Kribs2005,Blume-Kohout2008} and quantum reference frame \cite{Gour2009}.
To proceed, we need to introduce the notion of the (noise) commutant of the matrix algebra generated by the set of Kraus operators
$\{K_i,K_i^\dagger\}$, that is
\begin{align}
\mathcal{A}'=\{X\in\mathcal{B}(\mathcal{H}): [X,A]=0, A\in\{K_i,K_i^\dagger\}, \forall i\}.
\end{align}
It is easy to see that $\mathcal{A}'\subseteq \mathcal{F}(\Phi)$,
where $\mathcal{F}(\Phi)=\{X\in\mathcal{B}(\mathcal{H}): \Phi(X)=X\}$ denotes the set of fixed points of
\textit{unital} channel $\Phi$. Notably, the converse inclusion relation is also true, in other words,
 $\Phi$ we have the following lemma \cite{Arias2002,Kribs2003,Holbrook2003}
\begin{lemma}
For a (finite-dimensional) unital quantum channel $\Phi$, we have $\mathcal{A}'=\mathcal{F}(\Phi)$.
\label{C=F}
\end{lemma}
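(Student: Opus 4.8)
The plan is to establish the two inclusions separately. The inclusion $\mathcal{A}'\subseteq\mathcal{F}(\Phi)$ is immediate: if $X$ commutes with every $K_i$ and every $K_i^\dagger$, then $\Phi(X)=\sum_i K_i X K_i^\dagger=\sum_i K_i K_i^\dagger X=X$, using unitality $\sum_i K_i K_i^\dagger=\openone$. The substance of the lemma is the reverse inclusion $\mathcal{F}(\Phi)\subseteq\mathcal{A}'$, i.e.\ every fixed point of a unital channel actually commutes with all the Kraus operators.

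For the hard direction I would first reduce to self-adjoint and then to positive fixed points: if $\Phi(X)=X$ then taking adjoints gives $\Phi(X^\dagger)=X^\dagger$ (the Kraus form is adjoint-symmetric), so the real and imaginary parts of $X$ are fixed; and by shifting with a multiple of $\openone$ (itself fixed, by unitality) one may assume the self-adjoint fixed point is positive. The key analytic tool is the Cauchy--Schwarz/Kadison--Schwarz inequality for the unital completely positive map $\Phi$, namely $\Phi(X^\dagger X)\geq\Phi(X)^\dagger\Phi(X)$ for all $X$. Applying this to a positive fixed point $P$ (so that $\Phi(P)=P$ and, by scaling, we may also treat $P^{1/2}$) and exploiting $\Phi(P^2)\geq P^2=\Phi(P)^2$ together with the trace identity $\textrm{Tr}\,\Phi(P^2)=\textrm{Tr}\,P^2$ (trace preservation) forces equality $\Phi(P^2)=P^2$ in the operator inequality. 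Equality in the Schwarz inequality for a CP map is the standard ``multiplicative domain'' phenomenon: it yields $\Phi(PA)=\Phi(P)\Phi(A)$ and $\Phi(AP)=\Phi(A)\Phi(P)$ for all $A$, and in Kraus terms this translates into $K_i P=P K_i$ for every $i$ (hence also $K_i^\dagger P=P K_i^\dagger$ by taking adjoints). Running this over all positive fixed points, and then over their real-linear and complex-linear combinations, gives $\mathcal{F}(\Phi)\subseteq\mathcal{A}'$.

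The alternative, perhaps cleaner, route is to invoke the GNS/ergodic structure of unital channels directly: in finite dimension $\mathcal{B}(\mathcal{H})$ carries the Hilbert--Schmidt inner product, $\Phi$ is a contraction for it when unital and trace-preserving (this is where both unitality \emph{and} trace preservation are used — the map is then doubly stochastic, hence a Hilbert--Schmidt contraction), and a contraction on a finite-dimensional Hilbert space has $\ker(\Phi-\mathrm{id})=\ker(\Phi^\ast-\mathrm{id})$ with the fixed-point space equal to the fixed-point space of $\Phi\circ\Phi^\ast$. One then shows $\Phi^\ast\circ\Phi$ fixes $X$ iff $X$ lies in the multiplicative domain, and concludes as before that the fixed points form a $\ast$-subalgebra on which $\Phi$ acts as a $\ast$-homomorphism, forcing commutation with the Kraus operators. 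Either way, the essential input is the same: the Schwarz inequality for CP maps plus a trace (or norm) argument to promote it to an equality.

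The main obstacle I anticipate is the equality-condition analysis in the Kadison--Schwarz inequality — extracting, from $\Phi(X^\dagger X)=\Phi(X)^\dagger\Phi(X)$ holding on a subspace, the statement that each Kraus operator individually commutes with the fixed point (the ``multiplicative domain is a $C^\ast$-subalgebra'' argument). This is a known result in operator algebra (Choi), so in the paper it is legitimate to cite it as in Refs.~\cite{Arias2002,Kribs2003,Holbrook2003} rather than reprove it; the genuinely paper-specific content is simply checking that GIO, being unital \emph{and} trace-preserving, meet the hypotheses, and that the Hilbert--Schmidt contractivity holds so that the finite-dimensional fixed-point/cofixed-point coincidence can be applied.
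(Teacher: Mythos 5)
Your argument is correct and rests on the same underlying mechanism as the paper's, but it is packaged differently. The paper settles the hard inclusion $\mathcal{F}(\Phi)\subseteq\mathcal{A}'$ in one stroke: for a fixed point $X$ of a unital channel one has the algebraic identity $\sum_i[X,K_i][X,K_i]^\dagger=\Phi(XX^\dagger)-XX^\dagger$; trace preservation makes the right-hand side traceless, and since each summand on the left is positive semidefinite, every $[X,K_i]$ vanishes (and $[X,K_i^\dagger]=0$ follows by running the same argument on $X^\dagger$, which is also fixed). You instead reduce to positive fixed points $P$, invoke the Kadison--Schwarz inequality $\Phi(P^2)\geq\Phi(P)^2=P^2$, use trace preservation to upgrade it to the equality $\Phi(P^2)=P^2$, and then appeal to the multiplicative-domain phenomenon to conclude $[K_i,P]=0$. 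These are two dressings of the same computation: expanding $\Phi(P^2)-\Phi(P)^2$ with unitality gives exactly $\sum_i[K_i,P][K_i,P]^\dagger$, i.e.\ the paper's identity specialized to self-adjoint fixed points. One caveat in your phrasing: membership in the multiplicative domain alone does \emph{not} translate into commutation with the Kraus operators (a unitary conjugation has full multiplicative domain but its Kraus operator commutes with nothing in general); it is the combination of the fixed-point property and the Schwarz equality, unpacked through precisely the commutator-sum identity above, that yields $[K_i,P]=0$. Since you defer that extraction to the literature --- much as the paper itself cites Lindblad and Watrous --- your outline stands, and your extra ingredients (reduction to positive elements, the Hilbert--Schmidt-contraction alternative) are correct but not needed; what your route buys is a proof phrased entirely in terms of standard operator-algebraic tools (Schwarz inequality, multiplicative domain), at the cost of hiding the one-line identity that makes the paper's version self-contained. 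The easy inclusion via unitality matches the paper.
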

\begin{proof}
Note that in our case the converse inclusion relation can be elegantly proved by the identity \cite{Lindblad1999,Watrous}
\begin{align}
\sum_i[X,K_i][X,K_i]^\dagger=\Phi(XX^\dagger)-XX^\dagger,
\end{align}
with the trace-preserving property of $\Phi$.
\end{proof}

Now we present our first key observation:

\begin{observation}
The function of GIO is fully characterized by a correlation matrix $\mathcal{C}$, which can be represented
as a Gram matrix of a set of dynamical vectors $\{|c_i\rangle\}_{i=1}^d$.
\end{observation}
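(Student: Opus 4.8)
The plan is to combine Lemma~\ref{C=F} with the unital structure of GIO to pin down the fixed-point algebra, and then translate the resulting constraint on the Kraus operators into a Gram-matrix statement. First I would invoke Lemma~\ref{C=F}: since every GIO channel $\Phi$ is unital, its fixed-point set equals the commutant $\mathcal{A}'$ of the Kraus operators. But by definition each pure incoherent basis projector $|\phi_n\rangle\langle\phi_n|$ is fixed by $\Phi$, so the full set of diagonal matrices $\mathcal{D}(\mathcal{H})=\mathrm{span}\{|\phi_n\rangle\langle\phi_n|\}$ lies in $\mathcal{F}(\Phi)=\mathcal{A}'$. Consequently every Kraus operator $K_i$ must commute with all diagonal projectors $|\phi_n\rangle\langle\phi_n|$, which forces each $K_i$ to itself be diagonal in the prefixed basis. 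This is the structural heart of the argument: GIO Kraus operators are simultaneously diagonal.

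Next I would write $K_i=\sum_n (K_i)_{nn}\,|\phi_n\rangle\langle\phi_n|$ and collect the diagonal entries into a vector $|c_i\rangle := \sum_n (K_i)_{nn}\,|\phi_n\rangle$ for each Kraus index $i$ — these are the ``dynamical vectors''. A direct computation of $\Phi(\rho)=\sum_i K_i\rho K_i^\dagger$ on matrix elements then gives $\langle\phi_m|\Phi(\rho)|\phi_n\rangle = \big(\sum_i (K_i)_{mm}\overline{(K_i)_{nn}}\big)\,\langle\phi_m|\rho|\phi_n\rangle = \langle\phi_m|c\rangle\langle c|\phi_n\rangle$-type overlaps, i.e. $[\Phi(\rho)]_{mn}=\mathcal{C}_{mn}\,\rho_{mn}$ with $\mathcal{C}_{mn}=\langle c_n | c_m\rangle$ (up to the conjugation convention I fix at the start). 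Thus $\Phi$ acts as the Schur (Hadamard) product with the matrix $\mathcal{C}$, and $\mathcal{C}$ is manifestly the Gram matrix of the vectors $\{|c_i\rangle\}$ — wait, more precisely of the ``column vectors'' obtained by transposing, so I would be careful to state which set of vectors: the vector whose $i$-th component is $(K_i)_{nn}$, one such vector per basis index $n$. Either way the correlation matrix is a Gram matrix and hence positive semidefinite. The trace-preservation condition $\sum_i K_i^\dagger K_i=\openone$ translates to $\sum_i |(K_i)_{nn}|^2=1$, i.e. $\mathcal{C}_{nn}=1$, so $\mathcal{C}$ is a \emph{correlation matrix} in the precise sense (positive semidefinite with unit diagonal).

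The last step is to check the converse direction, so that the characterization is genuinely ``full'': given any positive semidefinite $\mathcal{C}$ with unit diagonal, a Gram decomposition $\mathcal{C}_{mn}=\langle c_n|c_m\rangle$ exists, I can read off diagonal Kraus operators from the components $|c_i\rangle$, verify $\sum_i K_i^\dagger K_i=\openone$ and $\sum_i K_i K_i^\dagger=\openone$, and confirm the resulting unital channel fixes every $|\phi_n\rangle\langle\phi_n|$ — hence is a GIO. I expect the main obstacle to be bookkeeping rather than conceptual: keeping the index roles straight (Kraus index $i$ versus basis index $n$), choosing a consistent convention for which argument of the inner product is conjugated so that $\mathcal{C}$ comes out positive semidefinite with the stated Schur-product action, and making sure the dimension of the Gram-vector space (number of Kraus operators, which may exceed $d$) is handled correctly. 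The one genuinely substantive input is the reduction ``GIO $\Rightarrow$ diagonal Kraus operators,'' and that is exactly what Lemma~\ref{C=F} buys us for free.
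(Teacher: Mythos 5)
Your proposal is correct and follows essentially the same route as the paper: invoke Lemma~\ref{C=F} to force every GIO Kraus operator to be diagonal in the incoherent basis, assemble the diagonal entries (indexed by basis label, with components over the Kraus index) into normalized vectors, and observe that the channel acts as a Schur product with their Gram matrix, trace preservation giving the unit diagonal. The only addition beyond the paper's argument is your explicit converse (any unit-diagonal positive semidefinite matrix arises this way), which is a harmless and valid strengthening of the word ``fully.''
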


\begin{proof} Applying Lemma \ref{C=F} to GIO, we now know that every Kraus operator of GIO commutes with all incoherent basis states,
indicating that all the Kraus operators must be of diagonal form with respect to the incoherent basis
\begin{align}
K_i=\sum_{j=1}^dc_{j}^{(i)}|\phi_j\rangle\langle\phi_j|,\, \forall i=1,\ldots,r,
\end{align}
with $r$ being the Choi rank of $\Phi$. From $\sum_iK_i^\dagger K_i=\openone$, we note that
\begin{align}
\sum_iK_i^\dagger K_i=\sum_j\sum_i|c_{j}^{(i)}|^2|\phi_j\rangle\langle\phi_j|=\openone,
\end{align}
which implies that the vectors $|c_i\rangle=(c_i^{(1)},c_i^{(2)},\ldots,c_i^{(r)})$ are automatically normalized.
Furthermore, the function of $\Phi$ can be represented as a \textit{Schur product} (i.e., entry-wise product) of the form
\begin{align}
\Phi(\rho)=\sum_iK_i\rho K_i^\dagger=\mathcal{C}^{T}\circ\rho,
\end{align}
where we define the correlation matrix as
\begin{align}
\mathcal{C}=
\left(\begin{matrix}
1 & \langle c_1|c_2\rangle & \ldots & \langle c_1|c_r\rangle  \\
\langle c_2|c_1\rangle & 1 & \ldots & \langle c_2|c_r\rangle  \\
\vdots & \vdots & \ddots & \vdots  \\
\langle c_r|c_1\rangle & \langle c_r|c_2\rangle & \ldots & 1  \\
\end{matrix}\right),
\end{align}
and the Schur (Hadamard) product of $A=[a_{ij}]$ and $B=[b_{ij}]$ is denoted by $A\circ B=[a_{ij}b_{ij}]$.
\end{proof}

Intriguingly, since the correlation matrix $\mathcal{C}$ is a Gram matrix of a set of vectors $\{|c_i\rangle\}_{i=1}^d$,
thus $\mathcal{C}$ is a positive semi-definite matrix, which confirms the positivity of $\Phi$ by Schur product theorem (Theorem 5.2.1 in \cite{Horn1991}).
Note that $\mathcal{C}$ is uniquely determined by $\Phi$ and the entries on the main diagonal always equal to 1.
As special cases of GIO, the von Neumann and L\"{u}ders measurement can be recast as
\begin{align}
\mathcal{D}(\rho)=\openone\circ\rho,\,
\mathcal{L}(\rho)=E\circ\rho,
\end{align}
where $E=E_{d_1}\oplus\cdots\oplus E_{d_N}$ with $d=\sum_{n=1}^Nd_n$ and $E_{d_n}$ denotes the $d_n$-dimensional square matrix with all
entries equal to 1. Another important example is the phase-damping channel
$\mathcal{E}(\rho)=p\rho+(1-p)\sigma_z\rho\sigma_z$ for a qubit system, which can also be written as a Schur product
\begin{align}
\mathcal{E}(\rho)
=\left(\begin{array}{cc}
1 & 2p-1 \\
2p-1 & 1
\end{array}\right)\circ\rho,
\end{align}
where $p\in[0,1]$ is the noise parameter.

If the decoherence basis $\{|\phi_i\rangle\}$ is fixed,
the decoherence effect is explicitly exhibited by the decay of the absolute value of matrix elements since
$|C_{ij}|=|\langle c_i|c_j\rangle|\leq1$. Moreover, this decoherence effect can be clearly seen
through successive uses of the channel
\begin{align}
\lim_{n\rightarrow\infty}\Phi^n(\rho)=\mathcal{D}(\rho).
\end{align}
Moreover, the entropy increase of GIO can be verified by the following majorization relation \cite{Bapat1985}
\begin{align}
\bm{\lambda}(A\circ B)\prec\bm{\lambda}(A)\circ\bm{\lambda}(\mathcal{D}(B))\prec\bm{\lambda}(A)\circ\bm{\lambda}(B),
\end{align}
where $A\geq0$, $B\geq0$ and $\bm{\lambda}(X)$ denotes the vector of eigenvalues of matrix $X$ in decreasing order.
If we choose $A=\rho$, $B=\mathcal{C}^{T}$ and note that $\mathcal{D}(\mathcal{C})=\openone$, we obtain
\begin{align}
\bm{\lambda}(\Phi(\rho))\prec\bm{\lambda}(\rho),
\end{align}
which leads to the inequality $S(\Phi(\rho))\geq S(\rho)$ for GIO \cite{Marshall} (see Appendix \ref{A2} for more discussion).

\subsection{Physical realization of GIO}
In his seminal work, von Neumann introduced a description of quantum measurement process for discrete observables in terms of
the interaction between system and apparatus \cite{Neumann}. Later, Ozawa generalized this description to continuous observables
in the framework of quantum instrument \cite{Davies}, where a four-tuple $\{\mathcal{A},|a_0\rangle,\mathbb{U},P_\mathcal{A}\}$
is proposed to fully characterize a measuring process \cite{Ozawa1984,Ozawa2000}. In such an indirect-measurement model,
the interaction unitary operator $\mathbb{U}$ plays a central role in establishing the correlation between the observed system and
the measuring apparatus. For instance, in von Neumann's premeasurement of a observable $R=\sum_nr_n|\phi_n\rangle\langle\phi_n|$ with
nondegenerate eigenvalues $r_n$, the structure of $\mathbb{U}$ is determined by
\begin{align}
\mathbb{U}_{\textrm{N}}\left(|\phi_n\rangle\otimes|a_0\rangle\right)=|\phi_n\rangle\otimes|a_n\rangle,
\end{align}
where $|a_0\rangle$ is a fixed pure state in the Hilbert space $\mathcal{A}$ of the apparatus system
and $\{|a_n\rangle\}$ is an orthogonal basis in $\mathcal{A}$. Hence if we measure an
observable $M_\mathcal{A}=\sum_nr_n|a_n\rangle\langle a_n|$ on the apparatus system,
a perfect correlation of measurement outcomes between $R$ and $M_\mathcal{A}$ will be established by $\mathbb{U}_{\textrm{N}}$
and the repeatability of von Neumann measurement is guaranteed \cite{Beltrametti1990}.
For the L\"{u}ders measurement, $\mathbb{U}$ admits a similar structure and the degeneracy of $R=\sum_nr_nP_n$
is taken into account
\begin{align}
\mathbb{U}_{\textrm{L}}\left(|\phi_{ni}\rangle\otimes|a_0\rangle\right)=|\phi_{ni}\rangle\otimes|a_n\rangle,
\end{align}
where $|\phi_{ni}\rangle$ constitute an orthonormal basis of $\mathcal{H}$ such that $P_n=\sum_i|\phi_{ni}\rangle\langle \phi_{ni}|$.
Obviously, the Kraus operator is exactly the orthogonal projector, i.e., $K_n=\langle a_n|\mathbb{U}_{\textrm{L}}|a_0\rangle=P_n$
and correpondingly the formula of state change is $\mathcal{L}(\rho)=\sum_nP_n\rho P_n$.

Since the von Neumann and L\"{u}ders measurement are special cases of GIO, intuitively $\mathbb{U}$ for GIO should have some extra degree of freedom
in its construction. Indeed, with respect to a complete orthogonal basis $\{\phi_{n}\}$
the interaction unitary operator $\mathbb{U}$ for GIO would be of the form
\begin{align}
\mathbb{U}_{\textrm{GIO}}\left(|\phi_{n}\rangle\otimes|a_0\rangle\right)=|\phi_{n}\rangle\otimes|c_n\rangle,
\end{align}
where $|c_n\rangle$ is exactly the one defined in the above subsection, that is, $c_n=\sum_ic^{(i)}_n|a_i\rangle$.
To gain a deeper insight, we have the following remarkable observation:
\begin{observation}
 $\mathbb{U}_{\textrm{GIO}}$ can be represented as a controlled-unitary operation, namely,
 \begin{align}
\mathbb{U}_{\textrm{GIO}}=\sum_n|\phi_{n}\rangle\langle\phi_{n}|\otimes U_n.
\end{align}
\end{observation}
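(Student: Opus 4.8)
The plan is to exhibit an explicit controlled-unitary dilation of the GIO and then verify that it reproduces the channel. First recall from the preceding observations that every Kraus operator of a GIO is diagonal in the incoherent basis, $K_i=\sum_{j}c^{(i)}_j|\phi_j\rangle\langle\phi_j|$, and that the induced dynamical vectors $|c_n\rangle=\sum_i c^{(i)}_n|a_i\rangle\in\mathcal{A}$ are normalized. Via the Stinespring--Kraus correspondence $K_i=\langle a_i|\mathbb{U}|a_0\rangle$, the dilation unitary is constrained only on the ``input slice'' $\{|\phi_n\rangle\otimes|a_0\rangle\}_n$, where it must act as $\mathbb{U}_{\textrm{GIO}}(|\phi_n\rangle\otimes|a_0\rangle)=|\phi_n\rangle\otimes|c_n\rangle$; the remaining freedom in extending it to all of $\mathcal{H}\otimes\mathcal{A}$ is exactly what we exploit.

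Next I would construct the $U_n$. For each $n$, both $|a_0\rangle$ and $|c_n\rangle$ are unit vectors of $\mathcal{A}$, so one can always choose a unitary $U_n$ on $\mathcal{A}$ with $U_n|a_0\rangle=|c_n\rangle$ -- for instance by completing $|a_0\rangle$ and $|c_n\rangle$ to orthonormal bases of $\mathcal{A}$ and mapping one basis onto the other, or by a single rotation in the plane they span. Setting
\begin{align}
\mathbb{U}_{\textrm{GIO}}=\sum_n|\phi_n\rangle\langle\phi_n|\otimes U_n
\end{align}
yields a block-diagonal operator with respect to the orthogonal decomposition $\mathcal{H}\otimes\mathcal{A}=\bigoplus_n\big(|\phi_n\rangle\otimes\mathcal{A}\big)$. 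Since $\{|\phi_n\rangle\langle\phi_n|\}$ is a complete family of rank-one orthogonal projectors and each $U_n$ is unitary on $\mathcal{A}$, $\mathbb{U}_{\textrm{GIO}}$ is unitary, and by construction $\mathbb{U}_{\textrm{GIO}}(|\phi_n\rangle\otimes|a_0\rangle)=|\phi_n\rangle\otimes U_n|a_0\rangle=|\phi_n\rangle\otimes|c_n\rangle$.

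Finally I would verify that this dilation realizes the correct GIO. Computing the associated Kraus operators, $\langle a_i|\mathbb{U}_{\textrm{GIO}}|a_0\rangle=\sum_n\langle a_i|U_n|a_0\rangle\,|\phi_n\rangle\langle\phi_n|=\sum_n\langle a_i|c_n\rangle\,|\phi_n\rangle\langle\phi_n|=\sum_n c^{(i)}_n|\phi_n\rangle\langle\phi_n|=K_i$, so they coincide with the original GIO Kraus operators; hence $\textrm{Tr}_{\mathcal{A}}\big[\mathbb{U}_{\textrm{GIO}}(\rho\otimes|a_0\rangle\langle a_0|)\mathbb{U}_{\textrm{GIO}}^\dagger\big]=\sum_i K_i\rho K_i^\dagger=\Phi(\rho)$, as required.

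I do not expect a serious obstacle; the only point needing care is that $\mathbb{U}$ is not uniquely fixed by $\Phi$ -- the controlled-unitary form is one legitimate extension among many, and the latitude in each $U_n$ beyond the constraint $U_n|a_0\rangle=|c_n\rangle$ is precisely the usual unitary freedom of dilation/Kraus representations (one should also take $\dim\mathcal{A}$ at least the Choi rank $r$ so that each $|c_n\rangle$ lies in $\mathcal{A}$). It is worth remarking, to connect with the earlier discussion, that the von Neumann and L\"{u}ders interactions $\mathbb{U}_{\textrm{N}},\mathbb{U}_{\textrm{L}}$ are recovered by the particular choice $U_n|a_0\rangle=|a_n\rangle$, so a controlled-unitary premeasurement is exactly one that leaves every incoherent basis state undisturbed.
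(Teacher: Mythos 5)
Your proposal is correct and follows essentially the same route as the paper: you take the diagonal Kraus operators $K_i=\sum_j c^{(i)}_j|\phi_j\rangle\langle\phi_j|$ from the preceding observation, define the dilation on the input slice by $|\phi_n\rangle\otimes|a_0\rangle\mapsto|\phi_n\rangle\otimes|c_n\rangle$, extend it as $\sum_n|\phi_n\rangle\langle\phi_n|\otimes U_n$ with $U_n|a_0\rangle=|c_n\rangle$, and confirm $\langle a_i|\mathbb{U}_{\textrm{GIO}}|a_0\rangle=K_i$, which is precisely the consistency check the paper performs. The only difference is that you make explicit the points the paper leaves implicit (existence of each $U_n$, unitarity of the block-diagonal extension, and the dimension requirement $\dim\mathcal{A}\geq r$), which is a welcome but not essentially new addition.
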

The effect of $U_n$ is to transform the fixed pure state $|a_0\rangle$ to a normalized vector $|c_n\rangle$, but \textit{not
necessary} orthogonal for distinct $n$. For comparison, when $\mathbb{U}_{\textrm{GIO}}$ reduces to $\mathbb{U}_{\textrm{N}}$
the set of $\{U_n\}$ transforms $|a_0\rangle$ to a complete set of orthogonal basis $\{|a_n\rangle\}$.
The corresponding Kraus operators are consistent with the previous discussion since
\begin{align}
K_n=\langle a_n|\mathbb{U}_{\textrm{GIO}}|a_0\rangle=\sum_ic^{(n)}_i|\phi_{i}\rangle\langle\phi_{i}|.
\end{align}

Especially, another significant example of controlled-unitary operations is the generalized controlled-NOT (CNOT) gate,
which can be defined by \cite{Daboul2003}
\begin{align}
\mathbb{U}_{\textrm{CNOT}}=\sum_{n=1}^{d}|n\rangle\langle n|\otimes \mathbb{X}^n,
\end{align}
where $\mathbb{X}$ is the generalized Pauli operator with $\mathbb{X}|i\rangle=|i+1\pmod d\rangle$ and
$d=\min(d_S,d_A)$ with $d_S$ ($d_A$) being the dimension of Hilbert space of the system (apparatus).
Note that the CNOT gate is a key ingredient for connecting resource theories of entanglement
to that of quantum coherence \cite{Streltsov2016} and is itself a bipartite SIO \cite{Winter2016}.
In contrast, $\mathbb{U}_{\textrm{GI}}$ is only incoherent with respect to the observed system
but overall coherence-generating for the system-apparatus interaction (e.g., $\{|a_n\rangle\}$
is chosen to the incoherent basis for the apparatus).

\subsection{Dissecting the structure of SIO and IO}
To illustrate the GIO as the core of SIO and IO, we first recall the relevant definitions and properties
of various types of incoherent operations. In the context of IO, the constraint of \textit{coherence-non-generating}
is put on the set of Kraus operators, which corresponds to a specific physical realization of IO \cite{Baumgratz2014}.
Accordingly, the notion of MIO is defined by putting the same constraint on its overall operation, irrespective of
the specific Kraus decomposition \cite{Aberg2006a,Chitambar2016b}. Along the same line, the relationship between
SIO and DIO is similar to that of IO and MIO, but the constraint is substituted by \textit{coherence-non-exploiting}
for a classical observer \cite{Winter2016,Yadin2016}. Interestingly, though the constraint of \textit{incoherent-state-preserving}
is the defining property of GIO, it has been shown that this constraint is automatically satisfied by
every Kraus decomposition of GIO. Indeed, this phenomenon has its root in the fact that GIO introduce a notion of
unspeakable coherence while SIO and IO are resource theories of speakable coherence \cite{Marvian2016,Vicente2017}.

Moreover, it has been rigorously proved in our previous work that the constraint of coherence-non-generating
(i.e., mapping every incoherent state to an incoherent state) would render every Kraus operator of IO to admit
the following representation \cite{Yao2015}
\begin{align}
K_n^{\textrm{IO}}=\sum_ic^{(n)}_i|f_n(i)\rangle\langle i|,
\end{align}
with $f_n(i)$ being a relabeling function specified by index $n$.
This structure guarantees that there exists at most one nonzero entry in every
column of $K_n^{\textrm{IO}}$.
Furthermore, SIO require that its dual operation would also satisfy this constraint, that is,
$K_n^\dagger\mathcal{I}(\mathcal{H})K_n\subseteq\mathcal{I}(\mathcal{H})$, which implies
\begin{align}
K_n^{\textrm{SIO}}=\sum_ic^{(n)}_i|\pi_n(i)\rangle\langle i|,
\end{align}
with $\pi_n(i)$ being a permutation function specified by index $n$.
Note that there is a crucial difference between $f_n(i)$ and $\pi_n(i)$: $\pi_n(i)$ is
bijective and invertible but in general $f_n(i)$ may \textit{not} be injective.
Therefore, the following observation is straightforward concerning this distinction:

\begin{observation}
The Kraus operators of SIO and IO can be obtained by combining Kraus operators of GIO
with the permutation operator and relabeling operator respectively. Mathematically,
we have
\begin{align}
K_n^{\textrm{SIO}}=\mathcal{P}_nK^{\textrm{GIO}}_n, \,
K_n^{\textrm{IO}}=\mathcal{R}_nK^{\textrm{GIO}}_n,
\end{align}
where we define
\begin{align}
\mathcal{P}_n=\sum_i|\pi_n(i)\rangle\langle i|, \,
\mathcal{R}_n=\sum_i|f_n(i)\rangle\langle i|.
\end{align}
\end{observation}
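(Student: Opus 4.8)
The plan is to establish the two identities by directly comparing the canonical Kraus forms already obtained, and then to check the one genuinely non-obvious point: that the diagonal remainders one is left with actually assemble into a GIO. First I would collect the building blocks. By Lemma~\ref{C=F} together with the correlation-matrix observation above, every Kraus operator of a GIO relative to the incoherent basis $\{|i\rangle\}$ is forced to be diagonal, $K_n^{\textrm{GIO}}=\sum_ic_i^{(n)}|i\rangle\langle i|$ with $\sum_n|c_i^{(n)}|^2=1$ for each $i$; and the structural results recalled above put every Kraus operator of an IO (resp.\ SIO) into the form $K_n^{\textrm{IO}}=\sum_ic_i^{(n)}|f_n(i)\rangle\langle i|$ (resp.\ $K_n^{\textrm{SIO}}=\sum_ic_i^{(n)}|\pi_n(i)\rangle\langle i|$), with $f_n$ a relabeling and $\pi_n$ a permutation.

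Next I would carry out the one-line computation that produces the factorization. With $\mathcal{R}_n=\sum_j|f_n(j)\rangle\langle j|$,
\begin{align}
\mathcal{R}_nK_n^{\textrm{GIO}}&=\sum_{i,j}c_i^{(n)}|f_n(j)\rangle\langle j|i\rangle\langle i| \nonumber\\
&=\sum_ic_i^{(n)}|f_n(i)\rangle\langle i|=K_n^{\textrm{IO}},
\end{align}
using $\langle j|i\rangle=\delta_{ij}$, and the identical computation with $\pi_n$ in place of $f_n$ gives $\mathcal{P}_nK_n^{\textrm{GIO}}=K_n^{\textrm{SIO}}$. Reading this in reverse --- recovering the GIO branches from a given incoherent operation --- is immediate in the SIO case: since $\mathcal{P}_n$ is a unitary permutation matrix one simply sets $K_n^{\textrm{GIO}}:=\mathcal{P}_n^\dagger K_n^{\textrm{SIO}}$, which is manifestly diagonal, and then $\sum_nK_n^{\textrm{GIO}\dagger}K_n^{\textrm{GIO}}=\sum_nK_n^{\textrm{SIO}\dagger}K_n^{\textrm{SIO}}=\openone$ together with diagonality shows at once that $\{K_n^{\textrm{GIO}}\}$ is a GIO.

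The step I expect to be the main obstacle is the IO case, precisely because $\mathcal{R}_n$ is in general not invertible --- not even an isometry --- as soon as $f_n$ fails to be injective, so one cannot ``divide out'' $\mathcal{R}_n$ the way one does with $\mathcal{P}_n$. Instead I would read the coefficients $c_i^{(n)}$ off the (at most one) nonzero entry in each column of $K_n^{\textrm{IO}}$, set $K_n^{\textrm{GIO}}:=\sum_ic_i^{(n)}|i\rangle\langle i|$, and verify the GIO conditions by hand. From $K_n^{\textrm{IO}\dagger}K_n^{\textrm{IO}}=\sum_{i,j}\overline{c_i^{(n)}}c_j^{(n)}\delta_{f_n(i),f_n(j)}|i\rangle\langle j|$, the diagonal $(i,i)$ component of the completeness relation $\sum_nK_n^{\textrm{IO}\dagger}K_n^{\textrm{IO}}=\openone$ is exactly $\sum_n|c_i^{(n)}|^2=1$ for every $i$; hence $\sum_nK_n^{\textrm{GIO}\dagger}K_n^{\textrm{GIO}}=\openone$, and since each $K_n^{\textrm{GIO}}$ is diagonal one has $\sum_nK_n^{\textrm{GIO}}|j\rangle\langle j|K_n^{\textrm{GIO}\dagger}=|j\rangle\langle j|$, so $\{K_n^{\textrm{GIO}}\}$ is a bona fide GIO. (The off-diagonal part of that same completeness relation imposes a further constraint relating $f_n$ to the $c_i^{(n)}$, which is why the statement is a structure theorem about a given IO rather than a recipe for manufacturing IOs out of arbitrary GIO and relabelings.)

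Finally I would append a short remark on bookkeeping and interpretation. Since Kraus decompositions are unique only up to isometric mixing, the identities are to be read at the level of the canonical decomposition --- the one in which each IO/SIO Kraus operator carries a single nonzero entry per column --- and the associated GIO decomposition inherits the same freedom. Moreover $\mathcal{P}_n$ is itself an incoherent unitary (a one-branch SIO) and $\mathcal{R}_n$ a one-branch IO, so the observation asserts precisely that every branch of an SIO or IO splits as a purely ``classical'' permutation/relabeling composed with a GIO branch, which is the sense in which GIO sits at the core.
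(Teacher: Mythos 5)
Your proposal is correct and follows essentially the same route as the paper: the observation is read off directly by comparing the canonical structural forms $K_n^{\textrm{IO}}=\sum_ic^{(n)}_i|f_n(i)\rangle\langle i|$, $K_n^{\textrm{SIO}}=\sum_ic^{(n)}_i|\pi_n(i)\rangle\langle i|$ with the diagonal GIO Kraus form, exactly as in your one-line multiplication. Your extra check that the extracted diagonal factors satisfy $\sum_n|c_i^{(n)}|^2=1$ and hence assemble into a bona fide GIO (with the off-diagonal completeness constraint for IO noted as an additional restriction tying $f_n$ to the coefficients) is left implicit in the paper but is consistent with its later remarks on the normalization conditions for SIO and IO.
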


Note that the \textit{permutation operator} $\mathcal{P}_n$ is in fact a unitary incoherent operator. Therefore,
for a valid coherence measure defined in \cite{Baumgratz2014}, such as
$C_{l_1}$ and $C_\textrm{re}$, we obtain
\begin{align}
C(\rho)\geq C(\mathcal{P}_n\rho\mathcal{P}_n^\dagger)\geq C(\mathcal{P}_n^\dagger(\mathcal{P}_n\rho\mathcal{P}_n^\dagger)\mathcal{P}_n)=C(\rho),
\end{align}
which indicates that $\mathcal{P}_n$ is a coherence-preserving operator. On the other hand, one can identify the decoherence effect of the \textit{relabeling operator}
$\mathcal{R}_n$ by acting it on the off-diagonal elements $|i\rangle\langle j|$
\begin{align}
\mathcal{R}_n|i\rangle\langle j|\mathcal{R}_n^\dagger=|f_n(i)\rangle\langle f_n(j)|.
\end{align}
When $i=j$ we have $f_n(i)=f_n(j)$ for all $n$ and probably $f_n(i)$ may not be equal to $i$.
This means that $\mathcal{R}_n$ may transfer a diagonal element to the other position on the diagonal.
If $i\neq j$ two possible cases emerge: (i) $f_n(i)\neq f_n(j)$, a situation in which the coherence
is retained but the position of this element is accordingly changed; (ii) $f_n(i)=f_n(j)$,
which implies that $f_n$ is not injective (i.e., many-to-one) and the $|i\rangle\langle j|$-coherence is destroyed.
In contrast to $\mathcal{P}_n$, $\mathcal{R}_n$ could be a coherence-destroying operator.

On the other hand, if we only focus on Kraus operators for GIO we obtain
\begin{align}
K^{\textrm{GIO}}_n|i\rangle\langle j|K^{\textrm{GIO}\dagger}_n=c^{(n)}_ic^{(n)\ast}_j|i\rangle\langle j|,
\end{align}
with $|c^{(n)}_ic^{(n)\ast}_j|\leq1$. Therefore, a GIO, or equivalently, a correlation matrix $\mathcal{C}$
can be regarded as a particular \textit{square sieve} for density matrices, since it preserves the diagonal entries
but partially obstructs the off-diagonal elements. This analogy reflects the unspeakable nature of GIO.
However, for SIO and IO, while $K^{\textrm{GIO}}_n$ is mainly responsible for coherence-destruction,
the permutation operator $\mathcal{P}_n$ and relabeling operator $\mathcal{R}_n$ enable
the transfers between different incoherent basis states. In fact, the above analysis implies the reason why GIO or SIO is
equally powerful as other seemingly more powerful operations (such as IO or MIO) on many occasions, a phenomenon emerged in many
recent relevant works \cite{Chitambar2016b,Zhu2017,Bu2017}.

In view of the above general consideration, we can also make explicit the structure of the interaction unitary
operators $\mathbb{U}$ for SIO and IO. Here we can adopt the method present in \cite{Buscemi2003},
where $\mathbb{U}$ can be constructed by a series of orthogonal isometries
\begin{align}
\mathbb{U}=V\otimes\langle a_0|+\sum_{i=1}^{d_A-1}W_i\otimes\langle g_i|,
\label{Structure}
\end{align}
where $\{|a_0\rangle,|g_1\rangle,\ldots|g_{d_A-1}\rangle\}$ constitutes another orthogonal basis for the Hilbert space $\mathcal{A}$
of the apparatus system and the set of isometries $\{V,W_1,\ldots W_{d_A-1}\}$ is orthogonal to each other.
Note that $V$ is of the form $\sum_iK_i\otimes|a_i\rangle$ and $\{W_i\}$ can be obtained by a repeated use of
the Gram-Schmidt method \cite{Buscemi2003}. Furthermore, the orthogonality of the set of isometries
(i.e., $V^\dagger W_i=0$ and $W_i^\dagger W_j=\delta_{ij}\openone_{\mathcal{H}}$) leads to
the fact that the ranges of distinct isometries are disjoint
and hence the unitarity of $\mathbb{U}$ is easily checked.
Moreover, when restricted to the subspace $\mathcal{H}\otimes|a_0\rangle\langle a_0|$,
the corresponding \textit{effective} $\mathbb{U}$ only contains the first term in Eq. (\ref{Structure}) \cite{Kraus1983},
which is of the form
\begin{align}
\mathbb{U}_{\textrm{SI}}=\sum_{ni}c^{(n)}_i|\pi_n(\phi_i)\rangle\langle\phi_i|\otimes|a_n\rangle\langle a_0|,\\
\mathbb{U}_{\textrm{IO}}=\sum_{ni}c^{(n)}_i|f_n(\phi_i)\rangle\langle\phi_i|\otimes|a_n\rangle\langle a_0|.
\end{align}
It should be emphasized that technically $\mathbb{U}_{\textrm{SI}}$ and $\mathbb{U}_{\textrm{IO}}$ are not unitary operators
(e.g., can be extended to a proper unitary operator by the above procedure) and the constraints on $c^{(n)}_i$ are also different.
For SIO, $c^{(n)}_i$ are restricted such that
the vectors $|c_i\rangle=(c_i^{(1)},c_i^{(2)},\ldots,c_i^{(r)})$ are normalized, which is equivalent to the case of GIO.
However, for IO, the constraint is fully characterized by
\begin{align}
\sum_{n:f_n(i)=f_n(j)}c_i^{(n)\ast} c_j^{(n)}=\delta_{ij}.
\end{align}

\section{DISCUSSION AND CONCLUSION}\label{sec5}
In this paper, we try to establish a comprehensive connection between coherence measures and conventional
decoherence processes. As an example, the most obvious consequences of the von Neumann measurement
are the complete elimination of off-diagonal elements (with respect to the basis specified by the spectrum of an observable) \cite{Zurek1981,Zurek2003}
and the entropy increase of the observed system \cite{Neumann}. It signifies that these phenomena can be employed to define
the valid coherence measures, even prior to the rigorous mathematical framework of Ref. \cite{Baumgratz2014},
where $C_{l_1}(\rho)$ and $C_{\textrm{re}}(\rho)$ are proposed as popular measures of coherence.

Inspired by work in Ref. \cite{Marvian2016}, we have extended our discussion to the L\"{u}ders-type measurement and proposed generalized
coherence measures $C_{l_1}(R,\rho)$ and $C_{\textrm{re}}(R,\rho)$ for possibly degenerate observable $R$,
which, by its eigen-decomposition $R=\sum_nr_nP_n$, splits the Hilbert space into degenerate subspaces.
Note that the L\"{u}ders-type state transformation formula can be derived from
the assumptions of discreteness of spectrum, eigenvalue-repeatability, and minimum disturbance principle.
Among these, repeatability hypothesis is indeed equivalent to the requirement that the transformed state
should belong to the set of generalized incoherent states (i.e., of block-diagonal structure $\rho=\sum_nP_n\rho P_n$),
while the minimum disturbance principle will further select the \textit{closest} one form the geometric point of view.

It is worth emphasizing that the $l_1$ norm of coherence is sensitive to the choice of eigen-decompositions of eigenspaces
characterized by $P_n$, which is tantamount to specifying a fine-grained nondegenerate observable $\mathfrak{R}$ satisfying $f(\mathfrak{R})=R$.
This is exactly the von Neumann's treatment when facing the degenerate observable. In contrast,
the relative entropy of coherence $C_{\textrm{re}}(R,\rho)=S(\rho\|\mathcal{L}(\rho))$
is free from this trouble, and meanwhile highlights the interpretation
that coherence can be regarded as a sort of \textit{incompatibility information} since \cite{Herbut2005}
\begin{align}
[\rho,R]=0\Leftrightarrow \rho=\mathcal{L}(\rho)=\sum_nP_n\rho P_n.
\end{align}
Moreover, compared to the von Neumann measurement, the L\"{u}ders-measurement-dependent quantum discord $\delta^{A|B}(R,\rho^{AB})$
(the observable R acting on subsystem B)
can be also formulated as the difference between the coherence in the global and local states \cite{Yadin2016}. An obvious
sufficient condition for $\delta^{A|B}(R,\rho^{AB})=0$ is the compatibility of $\rho_{AB}$ and $R$, i.e., $[\rho_{AB},R]=0$.
However, the necessary and sufficient condition for zero L\"{u}ders-type discord is left as an open question.

Since the von Neumann and L\"{u}ders measurements are special cases for GIO, we present a detailed analysis of
the structure and physical relation of GIO. We illustrate that GIO is the core of SIO and IO by introducing
the permutation operator $\mathcal{P}_n$ and relabeling operator $\mathcal{R}_n$. In fact, a GIO can be viewed
as a particular sieve which preserves the elements on the main diagonal but partially blocks the off-diagonal positions.
This implies that the Kraus operators of SIO and IO can be constructed by combining a Kraus operator of diagonal form (which
we can call the GIO part) with $\mathcal{P}_n$ or $\mathcal{R}_n$ respectively,
and the decoherence effect are mainly induced by the corresponding GIO part.
This is exactly what the word ``core'' means in the Abstract.

Another problem attracting our attention is the implication of repeatability for a measurement of
a discrete sharp observable. Indeed, in a system-apparatus measurement model of a discrete degenerate observable $R=\sum_nr_nP_n$,
the bipartite interaction unitary operator $\mathbb{U}$ is of the form
\begin{align}
\mathbb{U}\left(|\phi_{ni}\rangle\otimes|a_0\rangle\right)=|\theta_{ni}\rangle\otimes|a_n\rangle,
\end{align}
where the vectors $\{|\phi_{ni}\rangle\}$ form a orthogonal basis of $\mathcal{H}$ such that $R|\phi_{ni}\rangle=r_n|\phi_{ni}\rangle$
and $\{|\theta_{ni}\rangle\}$ is \textit{any} set of normalized vectors in $\mathcal{H}$ satisfying the orthogonality
conditions $\langle\theta_{ni}|\theta_{nj}\rangle=\delta_{ij}$ for all $i,j$ and any $n$ \cite{Beltrametti1990}.
Obviously, for the L\"{u}ders measurement, the choice of the set $\{|\theta_{ni}\rangle\}$ is just $\{|\phi_{ni}\rangle\}$.
However, if we only require the measurement to satisfy the repeatability condition, it is equivalent to require that
$P_n|\theta_{ni}\rangle=|\theta_{ni}\rangle$ for all $i$, which means that $|\theta_{ni}\rangle$ lies within the eigenspace corresponding to $r_n$
and $\{|\theta_{ni}\rangle\}$ constitute another orthogonal basis of $\mathcal{H}$ (see Lemma 1 in \cite{Buscemi2004} or discussions in \cite{Luders}).
Therefore, for an initial state $|\phi\rangle=\sum_{ni}\alpha_{ni}|\phi_{ni}\rangle$, the final states induced by the L\"{u}ders measurement
and this more general repeatable measurement are given by
\begin{align}
\rho_1&=\sum_nP_n\rho P_n=\sum_{n,i,j}\alpha_{ni}\alpha_{nj}^\ast|\phi_{ni}\rangle\langle\phi_{nj}|,\\
\rho_2&=\sum_nK_n\rho K_n^\dagger=\sum_{n,i,j}\alpha_{ni}\alpha_{nj}^\ast|\theta_{ni}\rangle\langle\theta_{nj}|,
\end{align}
with the Kraus operator $K_n=\sum_i|\theta_{ni}\rangle\langle\phi_{ni}|$. Intriguingly, if the residual coherences of final states
are defined in their respective basis, we have
\begin{align}
C_{l_1}^{\{|\phi_{ni}\rangle\}}(\rho_1)&=C_{l_1}^{\{|\theta_{ni}\rangle\}}(\rho_2)=\sum_n\sum_{i\neq j}|\alpha_{ni}\alpha_{nj}^\ast|,\\
C_{\textrm{re}}^{\{|\phi_{ni}\rangle\}}(\rho_1)&=C_{\textrm{re}}^{\{|\theta_{ni}\rangle\}}(\rho_2)=S(\{|\alpha_{ni}|^2\})-S,
\end{align}
where $S(\{|\alpha_{ni}|^2\})$ is the Shannon entropy of the probability distribution $\{|\alpha_{ni}|^2\}$ and $S=S(\rho_1)=S(\rho_2)$.
Therefore, the repeatability condition simply guarantees that the (properly defined) residual coherence contained in the final state is identical to that of
the L\"{u}ders measurement.

Finally, we notice that a more general notion of coherence is proposed recently for a positive operator-valued measure (POVM)
$\mathcal{M}=\{M_n\}$ with $M_n\geq0$ and $\sum_nM_n=\openone$ \cite{Coles2016}
\begin{align}
C_{\textrm{G}}(\rho)=S\left(\rho\|\sum_nM_n\rho M_n\right).
\end{align}
Note that $C_{\textrm{G}}(\rho)$ is well defined (i.e., $C_{\textrm{G}}(\rho)\geq0$) due to the fact $\textrm{Tr}(\sum_nM_n\rho M_n)\leq1$.
This quantity is involved in the derivation of key rates for unstructured quantum key distribution protocols \cite{Coles2016}.
However, since in general $\sum_nM_n\rho M_n$ is not normalized, one may define a modified version
by introducing the generalized L\"{u}ders operations \cite{Busch1991}
\begin{align}
\widetilde{C}_{\textrm{G}}(\rho)=S\left(\rho\|\sum_nM_n^{1/2}\rho M_n^{1/2}\right).
\end{align}
Similarly, we have $\widetilde{C}_{\textrm{G}}(\rho)\geq0$ but its physical meaning and application are left for
future investigation.

\begin{acknowledgments}
This research is supported by the Science Challenge Project (Grant No. TZ2017003-3) and
the National Natural Science Foundation of China (Grant No. 11605166).
C.P.Sun also acknowledges financial support from the National 973 program (Grant No. 2014CB921403),
the National Key Research and Development Program (Grant No. 2016YFA0301201),
and the National Natural Science Foundation of China (Grants No. 11421063 and 11534002).
\end{acknowledgments}
\appendix

\section{Repeatability hypothesis}\label{A1}
Since von Neumann's measurement scheme can be viewed as a particular case of L\"{u}ders postulate,
here we only need to consider the implication of repeatability hypothesis on the state transformation of L\"{u}ders-type measurement.
Let $R$ be a (degenerate) Hermitian operator with the discrete spectral form
\begin{align}
R=\sum_nr_nP_n,
\end{align}
where $r_n$ are distinct eigenvalues and $\sum_nP_n=\openone$ with $\textrm{Tr}(P_n)\geq1$.
Before proceeding, we may employ a useful lemma first proved by von Neumann \cite{Neumann}.

\begin{lemma}
For positive semi-definite operators $A\geq0$ and $B\geq0$, we have $AB=0$ if $\textrm{Tr}(AB)=0$.
\end{lemma}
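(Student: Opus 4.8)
The plan is to prove the statement: for positive semi-definite operators $A\geq0$ and $B\geq0$, $\operatorname{Tr}(AB)=0$ implies $AB=0$. First I would use the fact that a positive semi-definite operator admits a unique positive square root, so write $A=A^{1/2}A^{1/2}$ and $B=B^{1/2}B^{1/2}$, both with $A^{1/2}\geq0$ and $B^{1/2}\geq0$ Hermitian. Then by the cyclic property of the trace,
\begin{align}
0=\operatorname{Tr}(AB)=\operatorname{Tr}(A^{1/2}A^{1/2}B^{1/2}B^{1/2})=\operatorname{Tr}(B^{1/2}A^{1/2}A^{1/2}B^{1/2})=\operatorname{Tr}\!\left[(A^{1/2}B^{1/2})^\dagger(A^{1/2}B^{1/2})\right].
\end{align}
The right-hand side is the Hilbert-Schmidt norm squared $\|A^{1/2}B^{1/2}\|_2^2$, and since it vanishes we conclude $A^{1/2}B^{1/2}=0$.

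Next I would bootstrap from $A^{1/2}B^{1/2}=0$ to $AB=0$. Multiplying on the left by $A^{1/2}$ gives $A^{1/2}(A^{1/2}B^{1/2})=AB^{1/2}=0$; then multiplying on the right by $B^{1/2}$ gives $AB^{1/2}B^{1/2}=AB=0$, which is the claim. (Alternatively, taking the adjoint of $A^{1/2}B^{1/2}=0$ yields $B^{1/2}A^{1/2}=0$, so $A$ and $B$ have orthogonal ranges, which also gives $AB=0$ directly.) The converse implication, that $AB=0$ forces $\operatorname{Tr}(AB)=0$, is trivial, so one only needs the stated direction; in finite dimensions all trace and square-root manipulations are unproblematic.

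There is no serious obstacle here — the only point requiring a little care is the passage from the scalar identity $\operatorname{Tr}[(A^{1/2}B^{1/2})^\dagger(A^{1/2}B^{1/2})]=0$ to the operator identity $A^{1/2}B^{1/2}=0$, which relies on the positive-definiteness of the Hilbert-Schmidt inner product (equivalently, that the diagonal entries of $XX^\dagger$ in any orthonormal basis are the squared norms of the rows of $X$, so their sum vanishes iff $X=0$). Everything else is a one-line consequence of the existence of positive square roots and cyclicity of the trace.
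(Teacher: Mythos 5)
Your proof is correct and follows essentially the same route as the paper: write $\operatorname{Tr}(AB)=\|A^{1/2}B^{1/2}\|_2^2$ via the positive square roots and cyclicity of the trace, conclude $A^{1/2}B^{1/2}=0$, and then sandwich with $A^{1/2}$ and $B^{1/2}$ to obtain $AB=0$. No gaps; the extra remark about the positive-definiteness of the Hilbert--Schmidt inner product is exactly the point the paper's one-line argument relies on.
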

\begin{proof}
Since $A\geq0$ and $B\geq0$, we have the following
\begin{align}
\textrm{Tr}(AB)=\textrm{Tr}[(\sqrt{A}\sqrt{B})^\dagger(\sqrt{A}\sqrt{B})]=\|\sqrt{A}\sqrt{B}\|_2^2,
\end{align}
where $\|\bullet\|_2$ denotes the Hilbert-Schmidt norm. If $\textrm{Tr}(AB)=0$, then we get $\sqrt{A}\sqrt{B}=0$
and hence we have $AB=\sqrt{A}\sqrt{A}\sqrt{B}\sqrt{B}=0$.
\end{proof}

Assume that a measurement of observable $R$ on the initial state $\rho$ yields an eigenvalue $r_n$ and the corresponding (normalized) state
after the measurement is given by $\rho_n$. According to the repeatability hypothesis, we have the conditional probability
for an immediate successive measurement of $R$
\begin{align}
P(r_m|r_n)=\textrm{Tr}(\rho_nP_m)=\delta_{mn}.
\end{align}
In particular, we obtain $\textrm{Tr}[\rho_n(1-P_n)]=0$. By utilizing the above lemma, we finally have
$\rho_n=\rho_nP_n=P_n\rho_n=P_n\rho_nP_n$, which means that $\rho_n$ lies in the eigenspace characterized by $P_n$.
Note that if the observable $R$ is nondegenerate, then $P_n$ is a rank-one projection operator
and hence $\rho_n=P_n=|\phi_n\rangle\langle\phi_n|$.

Further, based on Born's statistical rule, the initial state $\rho$ is transformed to a statistical mixture of
the sub-ensembles
\begin{align}
\sigma=\sum_nP(r_n)\rho_n=\sum_nP(r_n)P_n\rho_nP_n.
\end{align}
Since $P_mP_n=\delta_{mn}$, we have
\begin{align}
\sigma=\sum_nP_n\left[\sum_mP(r_m)\rho_m\right]P_n
=\sum_nP_n\sigma P_n.
\end{align}
This indicates that the L\"{u}ders measurement transforms the initial state $\rho$ into $\sigma$ with a block-diagonal structure.

\section{Entropy increase for unital channels}\label{A2}
The phenomenon of entropy-increase in the (one-dimensional) projection measurement was first recognized by von Neumann \cite{Neumann}.
Generally, it is easy to prove that the L\"{u}ders-type measurements $\mathcal{L}(\rho)=\sum_nP_n\rho P_n$ increase the von Neumann entropy by Klein's inequality since
\begin{align}
S(\mathcal{L}(\rho))-S(\rho)=S(\rho\parallel\mathcal{L}(\rho))\geq0,
\end{align}
where $S(\rho\|\sigma)=\textrm{Tr}(\rho\log\rho)-\textrm{Tr}(\rho\log\sigma)$ is the quantum relative entropy.

Moreover, there is another elegant way to gain more insight into this fact. Especially, for the von Neumann
measurement of density matrix $\rho$ (where $P_n$ are rank-one orthogonal projectors), the Schur-Horn's theorem
leads to the following majorization relation \cite{Horn1954}
\begin{align}
\bm{\lambda}(\mathcal{D}(\rho))\prec\bm{\lambda}(\rho),
\end{align}
where $\bm{\lambda}(\rho)$ denotes the vector of eigenvalues of $\rho$. For more general cases,
we note that there exists a unitary mixing representation of the pinching operation $\mathcal{L}(\rho)$
\begin{align}
\mathcal{L}(\rho)=\sum_{n=1}^NP_n\rho P_n=\frac{1}{N}\sum_{k=1}^NU_k\rho U_k^\dagger,
\end{align}
where $N$ is the number of elements of the set $\{P_n\}$, which corresponds to the distinct eigenvalues
of the observable $R=\sum_nr_nP_n$, and the unitary matrix $U_k$ is defined as
\begin{align}
U_k=\sum_{j=1}^N\omega^{jk}P_j, \, \omega=e^{2\pi i/N}.
\end{align}
Therefore, according to Alberti-Uhlmann's theorem \cite{Alberti}, we have
\begin{align}
\bm{\lambda}(\mathcal{L}(\rho))\prec\bm{\lambda}(\rho).
\end{align}

Since the von Neumann entropy is a symmetric concave function (then automatically Schur-concave),
we obtain $S(\mathcal{L}(\rho))\geq S(\rho)$. This fact can also be confirmed directly by
the concavity of entropy using the unitary mixing representation of $\mathcal{L}(\rho)$
\begin{align}
S(\mathcal{L}(\rho))=S(\frac{1}{N}\sum_{k=1}^NU_k\rho U_k^\dagger)\geq\frac{1}{N}\sum_{k=1}^NS(U_k\rho U_k^\dagger)=S(\rho).
\end{align}
It is easy to see that $\mathcal{D}(\rho)$ and $\mathcal{L}(\rho)$ are both unital channels. In fact,
the similar majorization relation holds for all unital channels $\Phi(\openone)=\openone$,
i.e., $\bm{\lambda}(\Phi(\rho))\prec\bm{\lambda}(\rho)$ \cite{Alberti,Ando1989}.
Besides, the increase of entropy for unital channels can also be proved by the monotonicity of quantum relative entropy
under CPTP maps in $d$-dimensional Hilbert space, that is
\begin{align}
S\left(\rho\|\frac{\openone}{d}\right)\geq S\left(\Phi(\rho)\|\Phi(\frac{\openone}{d})\right)
=S\left(\Phi(\rho)\|\frac{\openone}{d}\right),
\end{align}
which is equivalent to $S(\Phi(\rho))\geq S(\rho)$.



\begin{thebibliography}{99}
\bibitem{Cleve1997} R. Cleve, A. Ekert, C. Macchiavello, and M. Mosca, Proc. R. Soc. A \textbf{454}, 339 (1998).
\bibitem{Nielsen} M. A. Nielsen and I. L. Chuang, \textit{Quantum Computation and Quantum Information}
(Cambridge University Press, Cambridge, 2000).
\bibitem{Gisin2002} N. Gisin, G. Ribordy, W. Tittel, and H. Zbinden, Rev. Mod. Phys. \textbf{74}, 145 (2002).
\bibitem{Scarani2009} V. Scarani, H. Bechmann-Pasquinucci, N. J. Cerf, M. Du\v{s}ek, N.
L\"{u}tkenhaus, and M. Peev, Rev. Mod. Phys. \textbf{81}, 1301 (2009).
\bibitem{Giovannetti2006} V. Giovannetti, S. Lloyd, and L. Maccone, Phys. Rev. Lett. \textbf{96}, 010401 (2006).
\bibitem{Giovannetti2011} V. Giovannetti, S. Lloyd, and L. Maccone, Nat. Photonics \textbf{5}, 222 (2011).
\bibitem{Baumgratz2014} T. Baumgratz, M. Cramer, and M. B. Plenio, Phys. Rev. Lett. \textbf{113}, 140401 (2014).
\bibitem{Streltsov2016} A. Streltsov, G. Adesso, and M. B. Plenio, arXiv:1609.02439.
\bibitem{Hu2017} M.-L. Hu, X. Hu, Y. Peng, Y.-R. Zhang, and H. Fan, arXiv:1703.01852.
\bibitem{Chitambar2016a} E. Chitambar and G. Gour, Phys. Rev. Lett. \textbf{117}, 030401 (2016).
\bibitem{Marvian2016} I. Marvian and R. W. Spekkens, Phys. Rev. A \textbf{94}, 052324 (2016).
\bibitem{Yao2016} Y. Yao, G. H. Dong, L. Ge, M. Li, and C. P. Sun, Phys. Rev. A \textbf{94}, 062339 (2016).
\bibitem{Marvian2013} I. Marvian and R. W. Spekkens, New J. Phys. \textbf{15}, 033001 (2013).
\bibitem{Marvian2014} I. Marvian and R. W. Spekkens, Nat. Commun. \textbf{5}, 3821 (2014).
\bibitem{Aberg2006a} J. \"{A}berg Ann. Phys. (NY) \textbf{313}, 326 (2004).
\bibitem{Aberg2006b} J. \"{A}berg, arXiv:0612146.
\bibitem{Winter2016} A. Winter and D. Yang, Phys. Rev. Lett. \textbf{116}, 120404 (2016).
\bibitem{Yadin2016} B. Yadin, J. Ma, D. Girolami, M. Gu, and V. Vedral, Phys. Rev. X \textbf{6}, 041028 (2016).
\bibitem{Vicente2017} J. I. de Vicente and A. Streltsov, J. Phys. A \textbf{50}, 045301 (2017).
\bibitem{Chitambar2016b} E. Chitambar and G. Gour, Phys. Rev. A \textbf{94}, 052336 (2016).

\bibitem{Joos2003} E. Joos, H. D. Zeh, C. Kiefer, D. Giulini, J. Kupsch, and I.-O. Stamatescu,
\textit{Decoherence and the Appearance of a Classical World in Quantum Theory} (Springer, Berlin, 2003).
\bibitem{Schlosshauer2007} M. Schlosshauer, \textit{Decoherence and the Quantum-to-Classical Transition} (Springer, Berlin, 2007).
\bibitem{Neumann} J. von Neumann, \textit{Mathematical Foundation of Quantum Mechanics} (Princeton University Press, New Jersey, 1955).
\bibitem{Zurek1981} W. H. Zurek, Phys. Rev. D \textbf{24}, 1516 (1981).
\bibitem{Zurek1982} W. H. Zurek, Phys. Rev. D \textbf{26}, 1862 (1982).
\bibitem{Zurek2003} W. H. Zurek, Rev. Mod. Phys. \textbf{75}, 715 (2003).
\bibitem{Schlosshauer2005} M. Schlosshauer, Rev. Mod. Phys. \textbf{76}, 1267 (2005).

\bibitem{Kraus1983} K. Kraus, \textit{States, Effects, and Operations} (Springer-Verlag, Berlin, 1983).
\bibitem{Stinespring} W. F. Stinespring, Proc. Am. Math. Soc. \textbf{6}, 211 (1955).
\bibitem{Kraus1971} K. Kraus, Ann. Phys. (NY) \textbf{64}, 311 (1971).
\bibitem{Choi1975} M.-D. Choi, Linear Algebra Appl. \textbf{10}, 285 (1975).
\bibitem{Buscemi2003} F. Buscemi, G. M. D'Ariano, and M. F. Sacchi, Phys. Rev. A \textbf{68}, 042113 (2003).
\bibitem{Davies} E. Davies, \textit{Quantum Theory of Open Systems} (Academic Press, London, 1976).
\bibitem{Ozawa1984} M. Ozawa, J. Math. Phys. \textbf{25}, 79 (1984).
\bibitem{Beltrametti1990} E. G. Beltrametti, G. Cassinelli, and P. J. Lahti, J. Math. Phys. \textbf{31}, 91 (1990).
\bibitem{Ozawa2000} M. Ozawa, Phys. Rev. A \textbf{62}, 062101 (2000).

\bibitem{Luders} G. L\"{u}ders, Ann. Phys. (Leipzig) \textbf{8}, 322 (1951) [See
Ann. Phys. (Leipzig) \textbf{15}, 663 (2006) for the English translation and discussion].
\bibitem{Bhatia} R. Bhatia, \textit{Matrix Analysis} (Springer, New York, 1997).


\bibitem{Jaeger1995} G. Jaeger, A. Shimony, and L. Vaidman, Phys. Rev. A \textbf{51}, 54 (1995).
\bibitem{Durr2001} S. D\"{u}rr, Phys. Rev. A \textbf{64}, 042113 (2001).
\bibitem{Englert2008} B.-G. Englert, D. Kaszlikowski, L. C. Kwek, and W. H. Chee, Int. J. Quant. Inform. \textbf{6}, 129 (2008).

\bibitem{Herbut2002} F. Herbut, Phys. Rev. A \textbf{66} 052321 (2002).
\bibitem{Herbut2005} F. Herbut, J. Phys. A \textbf{38}, 2959 (2005).

\bibitem{Busch1991} P. Busch, P. Lahti, and P. Mittelstaedt, \textit{The Quantum Theory of Measurement} (Springer-Verlag, Berlin, 1991).

\bibitem{Goldberger1964} M. L. Goldberger and K. M. Watson, Phys. Rev. \textbf{134}, B919 (1964).
\bibitem{Herbut1969} F. Herbut, Ann. Phys. (N.Y.) \textbf{55}, 271 (1969).
\bibitem{Herbut1974} F. Herbut, Int. J. Theor. Phys. \textbf{11}, 193 (1974).

\bibitem{Chitambar2016c} E. Chitambar, A. Streltsov, S. Rana, M. N. Bera, G. Adesso, and M. Lewenstein, Phys. Rev. Lett. \textbf{116}, 070402 (2016).
\bibitem{Luo2013} S. Luo and S. Fu, Int. J. Mod. Phys. B \textbf{27}, 1345026 (2013).
\bibitem{Ollivier2001} H. Ollivier and W. H. Zurek, Phys. Rev. Lett. \textbf{88}, 017901 (2001).
\bibitem{Henderson2001} L. Henderson and V. Vedral, J. Phys. A \textbf{34}, 6899 (2001).
\bibitem{Luo2017a} S. Luo and Y. Sun, Phys. Rev. A \textbf{96}, 022130 (2017).
\bibitem{Luo2017b} S. Luo and Y. Sun, Phys. Rev. A \textbf{96}, 022136 (2017).

\bibitem{Granas} A. Granas and J. Dugundji, \textit{Fixed Point Theory} (Springer, New York, 2003).
\bibitem{Raginsky2002} M. Raginsky, Phys. Rev. A \textbf{65}, 032306 (2002).
\bibitem{Kribs2005} D. W. Kribs, R. Laflamme, and D. Poulin, Phys. Rev. Lett. \textbf{94}, 180501 (2005).
\bibitem{Blume-Kohout2008} R. Blume-Kohout, H. K. Ng, D. Poulin, and L. Viola, Phys. Rev. Lett. \textbf{100}, 030501 (2008).
\bibitem{Gour2009} G. Gour, I. Marvian, and R. W. Spekkens, Phys. Rev. A \textbf{80}, 012307 (2009).
\bibitem{Arias2002} A. Arias, A. Gheondea, and S. Gudder, J. Math. Phys. \textbf{43}, 5872 (2002).
\bibitem{Kribs2003} D. W. Kribs, Proc. Edinb. Math. Soc. \textbf{46}, 421 (2003).
\bibitem{Holbrook2003} J. A. Holbrook, D.W. Kribs, and R. Laflamme, Quantum Inf. Process. \textbf{2}, 381 (2003).
\bibitem{Lindblad1999} G. Lindblad, Lett. Math. Phys. \textbf{47}, 189 (1999).
\bibitem{Watrous} J. Watrous, \textit{Theory of Quantum Information} (University of Waterloo, Waterloo, 2016).
\bibitem{Horn1991} R. A. Horn and C. R. Johnson, \textit{Topics in Matrix Analysis} (Cambridge University, Cambridge, England, 1991).
\bibitem{Bapat1985} R. B. Bapat and V. S. Sunder, Linear Algebra Appl. \textbf{72}, 107 (1985).
\bibitem{Marshall} A. W. Marshall, I. Olkin, and B. C. Arnold, \textit{Inequalities: Theory of Majorization and Its Applications}
(Springer, New York, 2011).

\bibitem{Daboul2003} J. Daboul, X. Wang, and B. C. Sanders, J. Phys. A: Math. Gen. \textbf{36}, 2525 (2003).
\bibitem{Yao2015} Y. Yao, X. Xiao, L. Ge, and C. P. Sun, Phys. Rev. A \textbf{92}, 022112 (2015).
\bibitem{Zhu2017} H. Zhu, Z. Ma, Z. Cao, S.-M. Fei, and V. Vedral, arXiv:1704.01935.
\bibitem{Bu2017} K. Bu, U. Singh, S.-M. Fei, A. K. Pati, J. Wu, arXiv:1707.08795.

\bibitem{Buscemi2004} F. Buscemi, G. M. D'Ariano and P. Perinotti, Phys. Rev. Lett. \textbf{92}, 070403 (2004).
\bibitem{Coles2016} P. J. Coles, E. M. Metodiev, and N. L\"{u}tkenhaus, Nat. Commun. \textbf{7}, 11712 (2016).

\bibitem{Horn1954} A. Horn, Am. J. Math. \textbf{76}, 620 (1954).
\bibitem{Alberti} P. M. Alberti and A. Uhlmann, \textit{Stochasticity and Partial Order: Doubly Stochastic Maps and Unitary Mixing}
(Dordrecht, Boston, 1982).
\bibitem{Ando1989} T. Ando, Linear Algebra Appl. \textbf{118}, 163 (1989).


\end{thebibliography}
\end{document}